\numberwithin{equation}{section}
\begin{document}

\newcommand{\beqn}{\begin{equation}}
\newcommand{\eeqn}{\end{equation}}
\newcommand{\beqna}{\begin{eqnarray}}
\newcommand{\ee}{\end{eqnarray}}
\newcommand{\best}{\begin{eqnarray*}}
\newcommand{\ees}{\end{eqnarray*}}

\newcommand{\undertilde}[1]{\underset{\widetilde{}}{#1}}
\newcommand{\al}{\alpha}
\newcommand{\de}{\delta}
\newcommand{\De}{\Delta}
\newcommand{\be}{\beta}
\newcommand{\ga}{\gamma}
\newcommand{\Ga}{\Gamma}
\newcommand{\Be}{\Xi}
\newcommand{\Al}{\Lambda}

\newcommand{\ala}{\bar{\alpha}}
\newcommand{\alb}{\bar{\bar{\alpha}}}
\newcommand{\dea}{\bar{\delta}}
\newcommand{\bea}{\bar{\beta}}
\newcommand{\gaa}{\bar{\gamma}}

\newcommand{\als}{\hat{\alpha}}
\newcommand{\des}{\hat{\delta}}
\newcommand{\bes}{\hat{\beta}}
\newcommand{\gas}{\hat{\gamma}}

\newcommand{\alas}{\hat{\bar{\alpha}}}
\newcommand{\deas}{\hat{\bar{\delta}}}
\newcommand{\beas}{\hat{\bar{\beta}}}
\newcommand{\gaas}{\hat{\bar{\gamma}}}

\newcommand{\as}{\hat{a}}
\newcommand{\ds}{\hat{d}}
\newcommand{\bs}{\hat{b}}
\newcommand{\cs}{\hat{c}}

\newcommand{\ab}{\bar{a}}
\newcommand{\da}{\bar{d}}
\newcommand{\ba}{\bar{b}}
\newcommand{\ca}{\bar{c}}

\newcommand{\ass}{\hat{\hat{a}}}
\newcommand{\dss}{\hat{\hat{d}}}
\newcommand{\bss}{\hat{\hat{b}}}
\newcommand{\css}{\hat{\hat{c}}}

\newcommand{\aas}{\hat{\bar{a}}}
\newcommand{\das}{\hat{\bar{d}}}
\newcommand{\bas}{\hat{\bar{b}}}
\newcommand{\cas}{\hat{\bar{c}}}

\newcommand{\la}{\lambda}
\newcommand{\La}{\Lambda}
\newcommand{\laa}{\bar{\lambda}}
\newcommand{\lab}{\bar{\bar{\lambda}}}
\newcommand{\lac}{\bar{\bar{\bar{\lambda}}}}
\newcommand{\lad}{\overset{4}{\lambda}}
\newcommand{\lae}{\overset{5}{\lambda}}
\newcommand{\laf}{\overset{6}{\lambda}}

\newcommand{\mus}{\hat{\mu}}

\newcommand{\va}{\bar{v}}
\newcommand{\vb}{\bar{\bar{v}}}
\newcommand{\vs}{\hat{v}}
\newcommand{\vas}{\hat{\bar{v}}}

\newcommand{\ua}{\bar{u}}
\newcommand{\ub}{\bar{\bar{u}}}
\newcommand{\us}{\hat{u}}
\newcommand{\uas}{\hat{\bar{u}}}

\newcommand{\ta}{\bar{t}}
\newcommand{\tb}{\bar{\bar{t}}}
\newcommand{\ts}{\hat{t}}
\newcommand{\tas}{\hat{\bar{t}}}

\newcommand{\vo}{v_1}
\newcommand{\voa}{\bar{v}_1}
\newcommand{\vos}{\hat{v}_1}
\newcommand{\voas}{\hat{\bar{v}}_1}

\newcommand{\vt}{v_2}
\newcommand{\vta}{\bar{v}_2}
\newcommand{\vts}{\hat{v}_2}
\newcommand{\vtas}{\hat{\bar{v}}_2}

\newcommand{\vth}{v_3}
\newcommand{\vtha}{\bar{v}_3}
\newcommand{\vths}{\hat{v}_3}
\newcommand{\vthas}{\hat{\bar{v}}_3}

\newcommand{\vf}{v_4}
\newcommand{\vfa}{\bar{v}_4}
\newcommand{\vfs}{\hat{v}_4}
\newcommand{\vfas}{\hat{\bar{v}}_4}

\newcommand{\wa}{\bar{w}}
\newcommand{\ws}{\hat{w}}
\newcommand{\was}{\hat{\bar{w}}}
\newcommand{\wb}{\bar{\bar{w}}}
\newcommand{\wc}{\bar{\bar{\bar{w}}}}

\newcommand{\xa}{\bar{x}}
\newcommand{\xb}{\bar{\bar{x}}}
\newcommand{\xc}{\bar{\bar{\bar{x}}}}
\newcommand{\xd}{\overset{4}{x}}
\newcommand{\xe}{\overset{5}{x}}
\newcommand{\xf}{\overset{6}{x}}

\newcommand{\xs}{\hat{x}}
\newcommand{\xss}{\hat{\hat{x}}}
\newcommand{\xas}{\hat{\bar{x}}}
\newcommand{\xbs}{\hat{\bar{\bar{x}}}}

\newcommand{\ya}{\bar{y}}
\newcommand{\yb}{\bar{\bar{y}}}
\newcommand{\yc}{\bar{\bar{\bar{y}}}}
\newcommand{\yd}{\overset{4}{y}}
\newcommand{\ye}{\overset{5}{y}}
\newcommand{\yf}{\overset{6}{y}}

\newcommand{\ys}{\hat{y}}
\newcommand{\yss}{\hat{\hat{y}}}
\newcommand{\yas}{\hat{\bar{y}}}

\newcommand{\zs}{\hat{z}}
\newcommand{\za}{\bar{z}}
\newcommand{\zss}{\hat{\hat{z}}}
\newcommand{\zas}{\hat{\bar{z}}}

\newcommand{\ra}{\bar{r}}
\newcommand{\rb}{\bar{\bar{r}}}
\newcommand{\rc}{\bar{\bar{\bar{r}}}}
\newcommand{\rs}{\hat{r}}
\newcommand{\rss}{\hat{\hat{r}}}
\newcommand{\ras}{\hat{\bar{r}}}

\newcommand{\p}{\mathcal{p}}

\newcommand{\du}[3]{#1_{#2}^{#3}}
\newcommand{\ol}[1]{\overline{#1}}
\newcommand{\ul}[1]{\underline{#1}}
\newcommand{\os}[2]{\overset{#1}{#2}}

\newcommand\qon{{\rm qP}_{\rm{\scriptstyle I}}}
\newcommand\qtw{{\rm qP}_{\rm{\scriptstyle II}}}
\newcommand\qth{{\rm qP}_{\rm{\scriptstyle III}}}
\newcommand\qfi{{\rm qP}_{\rm{\scriptstyle V}}}
\newcommand\qsi{{\rm qP}_{\rm{\scriptstyle VI}}}

\newcommand\don{{\rm dP}_{\rm{\scriptstyle I}}}
\newcommand\dtw{{\rm dP}_{\rm{\scriptstyle II}}}
\newcommand\dth{{\rm dP}_{\rm{\scriptstyle III}}}

\newcommand\con{{\rm cP}_{\rm{\scriptstyle I}}}
\newcommand\ctw{{\rm cP}_{\rm{\scriptstyle II}}}

\newcommand{\eqn}[1]{(\ref{#1})}
\newcommand{\ig}[1]{\includegraphics{#1}}

\theoremstyle{break}    \newtheorem{Cor}{Corollary}
\theoremstyle{plain}    \newtheorem{Exa}{Example}[section]
\newtheorem{Rem}{Remark} \theoremstyle{marginbreak}
\newtheorem{Lem}[Cor]{Lemma}
\newtheorem{Def}[Cor]{Definition}
\newtheorem{prop}{Proposition}
\newtheorem{theorem}{Theorem}
\newtheorem{eg}{Example}
\newtheorem{fact}{Fact}

\title[LMKdV Hierarchies]{Lattice Modified KdV Hierarchy from a Lax pair expansion}
\author{Mike Hay}
\address{School of Mathematics and Statistics, F07 University of Sydney, NSW 2006, Australia}
\email{mhay@uni.sydney.edu.au}

\begin{abstract}
We produce a hierarchiy of integrable equations by systematically adding terms to the Lax pair for the lattice modified KdV equation. The equations in the hierarchy are related to one aonother by recursion relations. These recursion relations are solved explicitly so that every equation in the hierarchy along with its Lax pair is known.\end{abstract}
\maketitle

\section{Introduction}
The goal of this paper is to present an integrable hierarchy of equations associated with the lattice modified KdV equation (LMKdV). We do this by considering a Lax pair expansion which is polynomial in the spectral parameter. This kind of procedure is not new, in the same way hierarchies of integrable equations have been produced \cite{al75}-\cite{pz09}, including hierarchies of purely discrete equations starting with \cite{lm01}.  Where this work differs markedly from previously known results, is in that we can write formulae that express every equation in the hierarchy and its Lax pair explicitly.

In \cite{h09} two-component analogues of the LMKdV and the lattice sine-Gordon equation (LSG) were shown to represent the most general systems that can be associated with the Lax pairs for those equations. The two-component analogue of LMKdV is
\begin{subequations}\label{lmkdv2}
\beqna
\frac{\xas}{\xs}+\frac{\ya}{y}&=&\frac{\xas}{\xa}+\frac{\ys}{y},\\
\frac{\yas}{\ys}+\frac{\xa}{x}&=&\frac{\yas}{\ya}+\frac{\xs}{x},
\ee
\end{subequations}
where there are two discrete independent variables $l,m\in\mathbb{Z}$, the complex dependent variables are $x=x(l,m)$, $y=y(l,m)$ and shifts are represented by $\xa=x(l+1,m)$, $\xs=x(l,m+1)$ etc. This system was dubbed LMKdV$_2$ because setting $y=\la\mu/x$, where $\la(l)$ and $\mu(m)$ are arbitrary functions, retrieves the familiar LMKdV. A Lax pair for \eqn{lmkdv2} is
\beqn
L=\left(\begin{array}{cc}1&\nu y/\xa\\ \nu x/\ya&1\end{array}\right),\quad\quad M=\left(\begin{array}{cc}1&\nu y/\xs\\ \nu x/\ys&1\end{array}\right),\label{lmkdv2lp}
\eeqn
where $\nu$ is the spectral parameter and the Lax matrices, $L$ and $M$, appear in the linear system
\beqn
\bar{\theta}=L\theta, \quad\quad \hat{\theta}=M\theta,\nonumber
\eeqn
which leads to the compatibility condition
\beqn
\widehat{L}M=\ol{M}L.\label{ccMat}
\eeqn

In \cite{cjm06} an algorithm was developed to derive Lax pairs for the continuous modified KdV hierarchy, based on the AKNS expansion technique \cite{akns74}. In this paper we use a similar construction to form hierarchies of equations by adding terms to a generalisation of the $L$ matrix in \eqn{lmkdv2lp} as
\beqn
L_k=\left(\begin{array}{cc}a_0+a_1 \nu^2+\ldots+a_k \nu^{2k}&b_0 \nu+b_1\nu^3+\ldots+b_k \nu^{2k+1}\\c_0\nu+c_1\nu^3+\ldots +c_k\nu^{2k+1}&d_0+d_1\nu^2+\ldots+d_k\nu^{2k}\end{array}\right),\label{lk}
\eeqn
where $a_i(l,m)$,  $0\leq i\leq k$ are unknowns that will be found using the compatibility condition, other roman letters are similar. We keep the same generalised form of $M$ for all members of the hierarchy
\beqn
M=\left(\begin{array}{cc}1&\be \nu\\\ga\nu&1\end{array}\right),\label{m}
\eeqn
where the diagonal entries are set to unity using a gauge.

While the form of the Lax pairs is nearly identical to that used in \cite{cjm06}, the methodology and results of this article are quite different. We will derive a set of recursion relations in the variable $k$ from \eqn{lk}, which describe how the next Lax pair in the hierarchy can be found from the previous one. These recursion relations will be solved explicitly to give formulas for every Lax pair in the hierarchy as well as every equation. This will show that, as $k$ is increased, the Lax pairs become increasingly complex but the equations retain the same simple form throughout. However, the number of arbitrary non-autonomous terms that can be inserted in to the reduced equations increases at each level of the hierarchy.

Section \ref{sec:rr} will see the compatibility condition put into a convenient form, then use this to derive recursion relations that relate different members of the hierarchy. In Section \ref{sec:soln} we provide solutions to the recursion relations found in section \ref{sec:rr} and describe how to use these to construct the Lax pairs including two examples. In section \ref{sec:sys} we derive the associated nonlinear systems and show how these are reduced to scalar equations on stretched quad-graphs. The paper terminates with a discussion.

\section{Derivation of the recursion relations}\label{sec:rr}
In this section we derive the recursion relations that relate different members of the hierarchy. In section \ref{ssec:cc} we put the compatibility condition into the required form and in section \ref{ssec:rr} we derive the recursion relations.
\subsection{Compatibility condition}\label{ssec:cc}
Using a gauge to set $a_0=d_0=1$ and substituting \eqn{lk} and \eqn{m} into the compatibility condition \eqn{ccMat} gives one equation for each of the four entries of the 2$\times$2 system:
\beqna
(1,1)\quad\quad 1+\sum_{i=1}^k (\as_i+\ga\bs_{i-1})\nu^{2i}+\ga\bs_k\nu^{2k+2}&=&1+\sum_{i=1}^{k}(a_i+\bea c_{i-1})\nu^{2i}+\bea c_k\nu^{2k+2},\nonumber\\
(2,2)\quad\quad1+\sum_{i=1}^k (\ds_i+\be\cs_{i-1})\nu^{2i}+\be\cs_k\nu^{2k+2}&=&1+\sum_{i=1}^{k}(d_i+\gaa b_{i-1})\nu^{2i}+\gaa b_k\nu^{2k+2},\nonumber\\
(1,2)\quad\quad \sum_{i=0}^k(\be\as_i+\bs_i)\nu^{2i+1}&=&\sum_{i=0}^{k}(b_i+\bea d_i)\nu^{2i+1},\nonumber\\
(2,1)\quad\quad \sum_{i=0}^k(\ga\ds_i+\cs_i)\nu^{2i+1}&=&\sum_{i=0}^{k}(c_i+\gaa a_i)\nu^{2i+1}.\nonumber
\ee
As the spectral parameter $\nu$ is independent of the lattice variables $l$ and $m$, we separate the compatibility conditions by powers of $\nu$. This yields
\begin{subequations}\label{cc}
\beqna
\as_i+\ga \bs_{i-1}&=&a_i+\bea c_{i-1},\quad 0<i\leq k\label{cc11}\\
\ds_i+\be \cs_{i-1}&=&d_i+\gaa b_{i-1},\quad 0<i\leq k\label{cc22}\\
\be \as_i+\bs_i&=&\bea d_i+b_i,\quad 0\leq i\leq k\label{cc12}\\
\ga \ds_i+\cs_i&=&\gaa a_i+c_i,\quad 0\leq i\leq k\label{cc21}
\ee
\end{subequations}
In addition to these we must consider the two extreme powers of $\nu$ in the diagonal entries. The equations coming from diagonal entries of the compatibility condition at order $\nu^0$ are satisfied and at order $\nu^{2k+2}$ permit the following parametrisation:
\beqn
\begin{array}{ll}
b_k=y/\xa,&\quad c_k=x/\ya,\\
\be=y/\xs,&\quad \ga=x/\ys,
\end{array}\label{para}
\eeqn
where we have introduced the dependent variables $x(l,m)$ and $y(l,m)$, these will become the dependent variables in the nonlinear systems associated with the Lax pairs. 

Symmetry exists in the compatibility condition such that we may interchange
\beqn
\{[1,1],[1,2],[2,1],[2,2],a_i,b_i,c_i,d_i,\be,\ga\}\leftrightarrow \{[2,2],[2,1],[1,2],[1,1],d_i,c_i,b_i,a_i,\ga,\be\},\label{symmetry}
\eeqn
where numbers in square brackets indicate the matrix entries of the compatibility condition, [1,1], [2,2], [1,2] and [2,1] are given by \eqn{cc11}, \eqn{cc22}, \eqn{cc12} and \eqn{cc21}, respectively. Because everything in this paper is derived from the compatibility conditions, this symmetry naturally persists throughout.

It is possible to remove one of the shifts in $m$ from the compatibility conditions. This is done by starting at the $i$-th condition from \eqn{cc12}, removing $\bs_i$ using \eqn{cc11}, $\as_{i+1}$ by using the $i+1$-st in \eqn{cc12}, then successively removing the shifts in $m$ by using the same equations.
\beqna
\as_i&=&\frac{\bea}{\be}d_i+\frac{1}{\be}b_{i}-\frac{1}{\be}\bs_{i},\nonumber\\
&=&\frac{\bea}{\be}d_i+\frac{1}{\be}b_{i}-\frac{\bea}{\be\ga}c_i-\frac{1}{\be\ga}(-a_{i+1}+\frac{\bea}{\be}d_{i+1}+\frac{1}{\be}b_{i+1}-\frac{1}{\be}\bs_{i+1}),\nonumber
\ee
etc. Because the last relevant compatibility condition, at order $\nu^{2k+1}$, is $\ga \bs_k=\bea c_k$, the procedure stops when $i=k$ and we obtain
\beqn
\as_i=\frac{\bea}{\be}d_i+\frac{1}{\be}b_i-\frac{\bea}{\be\ga}c_i+\sum_{j=1}^{k-i}\frac{1}{(\be\ga)^j}\left(-a_{i+j}+\frac{b_{i+j}}{\be}-\frac{\bea c_{i+j}}{\be\ga}+\frac{\bea d_{i+j}}{\be}\right).\nonumber
\eeqn
An expression for $\bs_i$ is obtained in a similar way starting with \eqn{cc11}, we find
\beqn
\bs_i=\frac{\bea}{\ga}c_i+\sum_{j=1}^{k-i}\frac{1}{(\be\ga)^j}\left(\be a_{i+j}-b_{i+j}+\frac{\bea}{\ga} c_{i+j}-\bea d_{i+j}\right).\nonumber
\eeqn
Expressions for $c_i$ and $d_i$ are obtained by symmetry \eqn{symmetry}. 

We write the compatibility conditions in matrix form by introducing
\beqn
\psi_i=\left(\begin{array}{c}a_i\\d_i\end{array}\right),\quad\quad\varphi_i=\left(\begin{array}{c}b_i\\c_i\end{array}\right),\label{def_psiphi}
\eeqn
which means that the compatibility conditions are equivalent to
\begin{subequations}\label{ccm}
\beqna
\hat{\psi_i}&=&K\psi_i+H\varphi_i+\sum_{j=i+1}^k D^{(j)}_{i}\psi_{j}+E^{(j)}_{i}\varphi_j, \quad 1\leq i\leq k,\label{shang}\\
\psi_0&=&\left(\begin{array}{c}1\\1\end{array}\right),\label{shang0}\\
\hat{\varphi}_i&=&J\varphi_i+\sum_{j=i+1}^k F^{(j)}_{i}\psi_{j}+G^{(j)}_{i}\varphi_j, \quad 0\leq i\leq k,\label{xia}
\ee
\end{subequations}
where
\beqna
&K=\left(\begin{array}{cc}0&\bea/\be\\\gaa/\ga&0\end{array}\right),\quad H=\left(\begin{array}{cc}1/\be&-\bea/(\be\ga)\\-\gaa/(\be\ga)&1/\ga\end{array}\right),\quad J=\left(\begin{array}{cc}0&\bea/\ga\\\gaa/\be&0\end{array}\right),\nonumber\\
&D^{(j)}_{i}=1/(\be\ga)^{j-i}\left(\begin{array}{cc}-1&\bea/\be\\\gaa/\ga&-1\end{array}\right),\quad E^{(j)}_{i}=1/(\be\ga)^{j-i}\left(\begin{array}{cc}1/\be&-\bea/(\be\ga)\\-\gaa/(\be\ga)&1/\ga\end{array}\right),
\nonumber\\
&F^{(j)}_{i}=1/(\be\ga)^{j-i}\left(\begin{array}{cc}\be&-\bea\\-\gaa&\ga\end{array}\right),\quad\quad G^{(j)}_{i}=1/(\be\ga)^{j-i}\left(\begin{array}{cc}-1&\bea/\ga\\\gaa/\be&-1\end{array}\right).\nonumber
\ee

\subsection{Recursion relations} \label{ssec:rr}
To solve the compatibility condition \eqn{ccm} we perform the following sequence of calculations:
\begin{itemize}
\item begin with the known quantity: $\psi_0=(1\,\,1)^T$,
\item use \eqn{shang}$_{i=0}$ to find $\varphi_0$ in terms of $\psi_i$ and $\varphi_i$, $i\geq1$,
\item use \eqn{xia}$_{i=0}$ to find $\psi_1$ in terms of $\varphi_1$, $\psi_i$ and $\varphi_i$, $i\geq 2$,
\item repeat the las two steps at increasing values of $i$.
\end{itemize}
Performing these calculations shows that solutions must be of the following form.
\begin{subequations}\label{ansatz}
\beqna
\psi_i&=&A_i-P_i \varphi_i-\sum_{j=i+1}^k Q^{(j)}_i\psi_j+R^{(j)}_i\varphi_j,\label{ansatzpsi}\\
\varphi_i&=&B_i-\sum_{j=i+1}^k S^{(j)}_i\psi_j+T^{(j)}_i\varphi_j,\label{ansatzphi}
\ee
\end{subequations}
for $0<i\leq k$, where $A_i$, $B_i$, $P_i$, $Q^{(j)}_i$, $R^{(j)}_i$, $S^{(j)}_i$ and $T^{(j)}_i$ become the unknowns for which we will derive recurrence relations below.

Take $\widehat{\eqn{ansatzphi}}$ and use \eqn{ccm} to remove $\hat{\psi_i}$ and $\hat{\varphi_i}$ resulting in
\beqn
\begin{array}{l}J \varphi_i+\sum^k_{j=i+1}F^{(j)}_i\psi_j+G^{(j)}_i\varphi_j=\widehat{B}_i\\\quad\quad\quad\quad\quad\quad\quad-\sum_{j=i+1}^k\left[\widehat{S}^{(j)}_i\left(K\psi_j+H\varphi_j+\sum_{h=j+1}^kD^{(h)}_j\psi_h+E^{(h)}_j\varphi_h\right)\right]\\\quad\quad\quad\quad\quad\quad\quad\quad\quad\quad -\sum_{j=i+1}^k\left[\widehat{T}^{(j)}_i\left(J\varphi_j+\sum_{h=j+1}^kF^{(h)}_j\psi_h+G^{(h)}_j\varphi_h\right)\right].\end{array}\nonumber
\eeqn
Collect terms and note that
\beqn
\sum_{j=i+1}^k\sum_{h=j+1}^k\widehat{S}^{(j)}_iD^{(h)}_j\psi_h=\sum_{j=i+2}^k\sum_{h=i+1}^{j-1}\widehat{S}^{(h)}_iD^{(j)}_h\psi_j\nonumber
\eeqn
 to find
 \beqn
 \begin{array}{c}
 (\widehat{S}^{(i+1)}_{i}K-JS^{(i+1)}_{i}-F^{(i+1)}_{i})\psi_{i+1}+(\widehat{S}^{(i+1)}_{i}H+\widehat{T}^{(i+1)}_{i}J-JT^{(i+1)}_{i}+G^{(i+1)}_{i})\varphi_{i+1}=\quad\quad\quad\\
\widehat{B}_i-JB_i -\sum_{j=i+2}^k\left\{\left[\widehat{S}^{(j)}_iK-JS^{(j)}_i-F^{(j)}_i+\sum_{h=i+1}^{j-1}\left(\widehat{S}^{(h)}_iD^{(j)}_h+\widehat{T}^{(h)}_iF^{(j)}_h\right)\right]\psi_j\right.\\
\quad\quad\quad\quad
 +\left.\left[\widehat{S}^{(j)}_iH+\widehat{T}^{(j)}_i-JT^{(j)}_i-G^{(j)}_i+\sum_{h=i+1}^{j-1}\left(\widehat{S}^{(h)}_iE^{(j)}_h+\widehat{T}^{(h)}_iG^{(j)}_h\right)\right]\varphi_j\right\}.
 \end{array}
 \eeqn
 
Writing this in the same form as \eqn{ansatzpsi}$_{i+1}$ and introducing 
\[\Ga_i=\left(\widehat{S}^{(i+1)}_{i}K-JS^{(i+1)}_{i}+F^{(i+1)}_{i}\right)^{-1}
\nonumber \]
leads to the following set of relations
 \begin{subequations}\label{rr1}
 \beqna
 A_{i+1}&=&\Ga_i\left(\widehat{B}_i-JB_i\right),\\
 P_{i+1}&=&\Ga_i\left(\widehat{S}^{(i+1)}_{i}H+\widehat{T}^{(i+1)}_{i}J-JT^{(i+1)}_{i}+G^{(i+1)}_{i}\right),\\
 Q^{(j)}_{i+1}&=&\Ga_i\left[\widehat{S}^{(j)}_iK-JS^{(j)}_i+F^{(j)}_i+\sum_{h=i+1}^{j-1}\left(\widehat{S}^{(h)}_iD^{(j)}_h+\widehat{T}^{(h)}_iF^{(j)}_h\right)\right],\label{rr1r}\\
 R^{(j)}_{i+1}&=&\Ga_i\left[\widehat{S}^{(j)}_iH+\widehat{T}^{(j)}_i-JT^{(j)}_i+G^{(j)}_i+\sum_{h=i+1}^{j-1}\left(\widehat{S}^{(h)}_iE^{(j)}_h+\widehat{T}^{(h)}_iG^{(j)}_h\right)\right].\quad\quad\quad
 \ee
 \end{subequations}
 The same argument can be followed starting with $\widehat{\eqn{ansatzpsi}}$ which results in the following set of relations
 \begin{subequations}\label{rr2}
 \beqna
 B_i&=&\La_i\left(\widehat{A}_i-KA_i\right),\\
 S^{(j)}_i&=&\La_i\left[\widehat{P}_iF^{(j)}_i+\widehat{Q}^{(j)}_iK-KQ^{(j)}_i+D^{(j)}_i+\sum_{h=i+1}^{j-1}\left(\widehat{Q}^{(h)}_iD^{(j)}_h+\widehat{R}^{(h)}_iF^{(j)}_h\right)\right],\qquad\,\,\,\\
  T^{(j)}_i&=&\La_i\left[\widehat{Q}^{(j)}_iH+\widehat{R}^{(j)}_iJ-KR^{(j)}_i+\widehat{P}_iG^{(j)}_i+E^{(j)}_i+\right.\nonumber\\
  &&\left.\qquad\qquad\qquad\qquad\qquad\qquad\sum_{h=i+1}^{j-1}\left(\widehat{Q}^{(h)}_iE^{(j)}_h+\widehat{R}^{(h)}_iG^{(j)}_h\right)\right],
  \ee
  \end{subequations}
where $\La_i=\left(\widehat{P}_iJ-KP_i+H\right)^{-1}$. Equations \eqn{rr1} and \eqn{rr2} together form a set of nonlinear recursion relations that describe any Lax pair of the form \eqn{lk} with an arbitrary number, $k$, of terms in each entry of the $L$ matrix. If a solution to these recursion relations can be found, then $\psi_i$ and $\varphi_i$ can be reconstructed from \eqn{ansatz} which is a set of linear algebraic equations. In the next section solutions to the recursion relations are given.

\section{Solution to the recursion relations}\label{sec:soln}
In this section we present exact solutions to the recursion relations derived above. These solutions enable us to construct every equation and Lax pair in the hierarchy explicitly (see section \ref{sec:sys}). We use the solutions to construct the Lax pairs for the second and third members of the hierarchy in section \ref{ssec:recon}. Some important alternative forms of the solutions are given in section \ref{ssec:alt}

\subsection{The solutions}
We introduce the following notation:
\beqn
\os{h}{a}=a(l,m+h),\nonumber
\eeqn
In particular $\os{h}{(\be\ga)}=\be(l,m+h)\ga(l,m+h)$.

Close inspection of \eqn{rr1} and \eqn{rr2} suggests the following theorem.
\begin{theorem}\label{thm:soln}
The set of recursion relations defined by \eqn{rr1} and \eqn{rr2} are satisfied by the following formulae.
\begin{subequations}\label{solnac}
\beqna
  A_i&=&\left(\begin{array}{c}\prod_{j=1}^i\os{2j-2}{\ga}\,\os{2j-1}{\be}\\\prod_{j=1}^i\os{2j-2}{\be}\,\os{2j-1}{\ga}\end{array}\right),\label{solna}\\
  B_i&=&\left(\begin{array}{c}\be\prod_{j=1}^i\os{2j-1}{\ga}\,\os{2j}{\be}\\\ga\prod_{j=1}^i\os{2j-1}{\be}\,\os{2j}{\ga}\end{array}\right).\label{solnc}
\ee
\end{subequations}

Let $P_i=\left(\begin{array}{cc}P_{1,i}&0\\0&P_{2,i}\end{array}\right)$ we have
\beqn
P_{1,i}=-\sum_{j=1}^i\left[\left(\frac{\os{2j-2}{\ga}}{\os{2j-1}{\ga}}+1\right)\frac{1}{\be}\prod_{h=1}^{j-1}\frac{\os{2h-2}{\ga}\,\os{2h-1}{\be}}{\os{2h-1}{\ga}\,\os{2h}{\be}}\right],\label{solnb}
\eeqn
and $P_{2,i}$ is the same except $\be$ and $\ga$ are interchanged.

We express the quantity $\Ga_i$ as 
\beqn
\Ga_i=\frac{1}{1-\bea\gaa/\os{2i+1}{(\be\ga)}}\left(\begin{array}{cc}\ga&\Ga_{1,i}\\\Ga_{2,i}&\be\end{array}\right)
\eeqn
 with
\beqn
\Ga_{1,i}=\frac{\bea\ga}{\gas}\prod_{j=1}^i\frac{\os{2j-1}{\be}\,\os{2j}{\ga}}{\os{2j}{\be}\,\os{2j+1}{\ga}}\label{solnga}
\eeqn
and $\Ga_{2,i}$ is obtained from $\Ga_{1,i}$ by interchanging $\be$ and $\ga$. 

Similarly $\La_i=\left(\begin{array}{cc}\be&\La_{1,i}\\\La_{2,i}&\ga\end{array}\right)$ with
\beqn
\La_{1,i}=\bea\prod_{j=1}^i\frac{\os{2j-2}{\be}\,\os{2j-1}{\ga}}{\os{2j-1}{\be}\,\os{2j}{\ga}}\label{solde}
\eeqn

The matrices $Q^{(j)}_i$, $R^{(j)}_i$, $S^{(j)}_i$ and $T^{(j)}_i$ are all diagonal and the $[2,2]$ elements are obtained by interchanging $\be$ and $\ga$ in the corresponding $[1,1]$ elements. As such, we write only the $[1,1]$ elements of these solutions below. The $[1,1]$ element of $S^{(j)}_i$ is expressed as
\beqn
S^{(j)}_{i_{[1,1]}}=\sum_{h=0}^{2i}\frac{s^{(h)}_i}{(\os{h}{\be\ga})^j},\label{solnp}
\eeqn 
where
\beqna
{s}^{(h)}_i&=&-U_i^{(h)}(\os{h}{\be\ga})\sum_{g=0}^i \xi^{(2g)}_{2i}(\ga)(\os{h}{\be\ga})^g,\nonumber\\
U_i^{(h)}&=&(\os{h}{\be\ga})^{i-1}\frac{\be\prod_{g=1}^i\os{2g-1}{\ga}\,\os{2g}{\be}}{\prod_{g=0,\,g\neq h}^{2i}\left[(\os{g}{\be\ga})-(\os{h}{\be\ga})\right]},\nonumber\\
\xi^{(2g)}_{2i}(\ga)&=&\sum_{h_1=0}^{2i}\sum_{h_2=0}^{h_1-1}\ldots\sum_{h_{2(i-g)}=0}^{h_{2(i-g)-1}-1}\os{h_1}{\ga}\,\os{h_2}{\be}\ldots\os{h_{2(i-g)-1}}{\ga}\,\os{h_{2(i-g)}}{\be}\label{xi}
\ee
(See notes on the definition of $\xi$ after the end of this theorem.)

The $[1,1]$ element of $T^{(j)}_i$ is expressed as
\beqn
T^{(j)}_{i_{[1,1]}}=\sum_{h=0}^{2i}\frac{t^{(h)}_i}{(\os{h}{\be\ga})^j},\label{solnq}
\eeqn 
where
\beqn
{t}^{(h)}_i=U_i^{(h)}\sum_{g=0}^i\eta_{2i}^{(2g)}(\ga)(\os{h}{\be\ga})^g\nonumber
\eeqn
and
\beqn
\eta_{2i}^{(2g)}(\ga)=\sum_{h_1=0}^{2i}\sum_{h_2=0}^{h_1-1}\ldots\sum_{h_{2(i-g)}=0}^{h_{2(i-g)-1}-1}\sum_{h_{2(i-g)+1}=0}^{h_{2(i-g)}-1}\os{h_1}{\ga}\,\os{h_2}{\be}\ldots\os{h_{2(i-g)}}{\be}\,\os{h_{2(i-g)+1}}{\ga}.\label{eta}
\eeqn

The $[1,1]$ element of $Q^{(j)}_i$ is expressed as
\beqn
Q^{(j)}_{i_{[1,1]}}=\sum_{h=0}^{2i-1}\frac{q^{(h)}_i}{(\os{h}{\be\ga})^j},\label{solnr}
\eeqn 
where
\beqna
{q}^{(h)}_i&=&-V_i^{(h)}\sum_{g=0}^{i}\eta^{(2g)}_{2i-1}(\ga)(\os{h}{\be\ga})^g,\nonumber\\
\eta^{(2g)}_{2i-1}(\ga)&=&\sum_{h_1=0}^{2i-1}\sum_{h_2=0}^{h_1-1}\ldots\sum_{h_{2(i-g)-1}=0}^{h_{2(i-g)-2}-1}\sum_{h_{2(i-g)}=0}^{h_{2(i-g)-1}-1}\os{h_1}{\ga}\,\os{h_2}{\be}\ldots\os{h_{2(i-g)-1}}{\ga}\,\os{h_{2(i-g)}}{\be},\nonumber\\
V_i^{(h)}&=&(\os{h}{\be\ga})^{i-1}\frac{\prod_{g=1}^i\os{2g-2}{\ga}\,\os{2g-1}{\be}}{\prod_{g=0,\,g\neq h}^{2i-1}\left[(\os{g}{\be\ga})-(\os{h}{\be\ga})\right]}.\nonumber
\ee

The $[1,1]$ element of $R^{(j)}_i$ is expressed as
\beqn
R^{(j)}_{i_{[1,1]}}=\sum_{h=0}^{2i-1}\frac{r^{(h)}_i}{(\os{h}{\be\ga})^j},\label{solns}
\eeqn 
where
\beqna
{r}^{(h)}_i&=&V_i^{(h)}\sum_{g=0}^{i}\xi^{(2g)}_{2i-1}(\ga)(\os{h}{\be\ga})^g\nonumber\\
\xi^{(2g)}_{2i-1}(\ga)&=&\sum_{h_1=0}^{2i-1}\sum_{h_2=0}^{h_1-1}\ldots\sum_{h_{2(i-g)-2}=0}^{h_{2(i-g)-3}-1}\sum_{h_{2(i-g)-1}=0}^{h_{2(i-g)-2}-1}\os{h_1}{\ga}\,\os{h_2}{\be}\ldots\os{h_{2(i-g)-2}}{\be}\,\os{h_{2(i-g)-1}}{\ga}.\nonumber
\ee
\end{theorem}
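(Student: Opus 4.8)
The theorem is a verification statement: the recursions \eqn{rr1}--\eqn{rr2} together with $\psi_0=(1\ 1)^T$ determine all of $A_i,B_i,P_i,Q^{(j)}_i,R^{(j)}_i,S^{(j)}_i,T^{(j)}_i$ uniquely, so it is enough to substitute the proposed formulae and check that both systems hold (uniformly in the truncation $k$, on which none of the formulae depend). I would argue by induction on $i$, following the dependency chain built into the recursion: from the ``$\psi$-data'' $\{A_i,P_i,Q^{(j)}_i,R^{(j)}_i\}$ one forms $\La_i=(\widehat P_iJ-KP_i+H)^{-1}$, then \eqn{rr2} produces the ``$\varphi$-data'' $\{B_i,S^{(j)}_i,T^{(j)}_i\}$, then one assembles $\Ga_i=(\widehat S^{(i+1)}_iK-JS^{(i+1)}_i+F^{(i+1)}_i)^{-1}$ from the $S$-data, and finally \eqn{rr1} delivers the $\psi$-data at level $i+1$. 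The organising principle is that every $j$-indexed matrix belongs to the restricted rational class $\sum_h c^{(h)}/(\os{h}{\be\ga})^{j}$ in the shifted products $\os{h}{\be\ga}$, a class closed under all the operations occurring in \eqn{rr1}--\eqn{rr2}; moreover the $\be\leftrightarrow\ga$ symmetry \eqn{symmetry} means one need only track the $[1,1]$ entries of $Q,R,S,T$, one component of $A_i,B_i$, and one off-diagonal entry of $\La_i,\Ga_i$, which halves the bookkeeping.

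For the base case $i=0$, the ansatz \eqn{ansatz} together with $\psi_0=(1\ 1)^T$ forces $A_0=(1\ 1)^T$, $P_0=0$ and $Q^{(j)}_0=R^{(j)}_0=0$, agreeing with \eqn{solnac}, \eqn{solnb}, \eqn{solnr}, \eqn{solns} read with empty products and sums; then $\La_0=H^{-1}$ is computed directly, and \eqn{rr2} at $i=0$ gives $B_0,S^{(j)}_0,T^{(j)}_0$ as $H^{-1}$ times $(\be\ga)^{-j}$ times a constant matrix, which one matches against \eqn{solnc}, \eqn{solnp}, \eqn{solnq} (the $h$-sums there collapsing to their $h=0$ term). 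For the inductive step, assuming the $\psi$-data formulae at level $i$, I would: (1) carry out the $2\times2$ inversion for $\La_i$ --- since $P_i$ is diagonal and $J,K$ anti-diagonal, $\widehat P_iJ-KP_i+H$ has diagonal $(1/\be,1/\ga)$ and off-diagonal entries that telescope using \eqn{solnb}, so the inverse emerges in the stated form with its scalar normalisation coming from the determinant; (2) substitute $\La_i$ and the level-$i$ data into \eqn{rr2}; (3) read $\Ga_i$ off the newly computed $S^{(i+1)}_i$ and verify its form, the prefactor $1/(1-\bea\gaa/\os{2i+1}{\be\ga})$ again arising from the determinant; (4) substitute $\Ga_i,B_i,S^{(j)}_i,T^{(j)}_i$ into \eqn{rr1} and identify $A_{i+1},P_{i+1},Q^{(j)}_{i+1},R^{(j)}_{i+1}$. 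In stages (2) and (4) one must check both that the off-diagonal entries of the $j$-indexed matrices produced by the recursion cancel, consistently with the claimed diagonality, and that the surviving diagonal entries reproduce \eqn{solnp}--\eqn{solns}.

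The crux --- and the step I expect to be the main obstacle --- is the convolution term, e.g. $\sum_{h=i+1}^{j-1}\bigl(\widehat Q^{(h)}_iD^{(j)}_h+\widehat R^{(h)}_iF^{(j)}_h\bigr)$ and its analogues, which must reproduce a partial fraction $\sum_h c^{(h)}/(\os{h}{\be\ga})^{j}$ with exactly the prescribed numerators. Because $D^{(j)}_h,E^{(j)}_h,F^{(j)}_h,G^{(j)}_h$ each carry the factor $(\be\ga)^{-(j-h)}$ at the appropriate shift, the inner sum over $h$ is a finite geometric series, and summing it produces precisely two simple fractions --- one retaining the denominator $(\os{g}{\be\ga})^{j}$ inherited from $Q^{(g)}_i$ or $S^{(g)}_i$, one with a denominator fixed by the constant matrix entries --- which is the mechanism that keeps the rational class closed. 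Matching the numerators then calls for two families of auxiliary identities: first, a Lagrange-interpolation identity for the weights $U^{(h)}_i$ and $V^{(h)}_i$, which up to normalisation are the values of Lagrange basis functions at the nodes $\os{g}{\be\ga}$, so that the $h$-sums of the simple fractions resum to polynomials in $\os{h}{\be\ga}$ of controlled degree --- producing the factors $\sum_g\xi^{(2g)}(\ga)(\os{h}{\be\ga})^g$, $\sum_g\eta^{(2g)}(\ga)(\os{h}{\be\ga})^g$, etc.; and second, single-step recursions for the nested alternating sums $\xi^{(2g)}$, $\eta^{(2g)}$ of \eqn{xi} and \eqn{eta}, obtained by splitting off the largest or the smallest summation index, which link the combinatorial data at levels $2i-1$, $2i$ and $2i+1$ as $i$ advances. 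I expect that arranging these two ingredients in the right order, together with tracking the boundary contributions of the geometric sums, is the whole substance of the proof; the remainder is finite, if laborious, coefficient comparison.
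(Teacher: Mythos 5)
Your proposal is correct and follows essentially the same route as the paper's proof in Appendix A: direct substitution into \eqn{rr1}--\eqn{rr2}, summing the convolution terms as geometric series in $(\be\ga)^{-(j-h)}$ to stay inside the partial-fraction class $\sum_h c^{(h)}/(\os{h}{\be\ga})^j$, then matching numerators via Lagrange-interpolation (``Euler'') identities for the weights $U_i^{(h)},V_i^{(h)}$ together with the split-off-largest/smallest-index recursions for the nested sums $\xi,\eta$. The only cosmetic difference is that you organise the verification as an explicit induction along the dependency chain, whereas the paper verifies a representative entry (the $[1,2]$ entry of the $Q$-recursion) and appeals to symmetry and similarity for the rest.
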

\begin{proof}
The proof simply involves substituting the solutions into \eqn{rr1} and \eqn{rr2} to show that they are satisfied. However, it is rather technical and so it is left to Appendix \ref{app:solnpf}.
\end{proof}
Some notes on the definitions of $\xi^{(2g)}_{2i}(\ga)$: the upper index affects both the upper limit of the first sum and the number of factors (sums), the argument of the function (in this case $\ga$) refers to the first factor in the sum, factors alternate between $\be$ and $\ga$ after that. The definition of $\eta^{(2g)}_{2i}(\ga)$ is similar except there is one extra factor (and one extra sum). In general
\beqna
\xi^{(b)}_a(\gamma)&=&\sum_{h_1=0}^{a}\sum_{h_2=0}^{h_1-1}\ldots\sum_{h_{a-b-1}=0}^{h_{a-b-2}-1}\sum_{h_{a-b}=0}^{h_{a-b-1}-1}\os{h_1}{\ga}\,\os{h_2}{\be}\ldots\os{h_{a-b-1}}{\be}\,\os{h_{a-b}}{\ga},\nonumber\\
\eta^{(b)}_a(\gamma)&=&\sum_{h_1=0}^{a}\sum_{h_2=0}^{h_1-1}\ldots\sum_{h_{a-b}=0}^{h_{a-b-a}-1}\sum_{h_{a-b+1}=0}^{h_{a-b}-1}\os{h_1}{\ga}\,\os{h_2}{\be}\ldots\os{h_{a-b}}{\ga}\,\os{h_{a-b+1}}{\be},\nonumber
\ee
where $a-b>0$ is assumed to be odd in the above formulae, but the only difference when $a-b$ is even is that $\be$ and $\ga$ must be swapped in the last two factors. Although the lower limits of all the sums are zero, we adopt the convention that $\sum_{h=c}^d f(h)=0$ when $d<c$, so the effective lower limit is greater than zero for many of the sums. For $a=b$ we let $\xi_a^{(a)}=1$, the case where $a<b$ does not arise.

\subsection{Construction of the Lax pairs}\label{ssec:recon}
Here we explain how the solutions given in the previous section are used to construct the Lax pair for any member of the hierarchy. 

For $k=0$ the Lax pair is already known, and is written immediately from \eqn{para}. Starting with the formal solutions \eqn{ansatz}, it is obvious that the following linear system must hold when $k>0$
\beqn
Z_k\Psi_k=C_k,\label{recon}
\eeqn
where

\beqn
Z_k=
\left[\begin{array}{ccccccccc}
I&S_0^{(1)}&T_0^{(1)}&S_0^{(2)}&T_0^{(2)}&\ldots&S_0^{(k-1)}&T_0^{(k-1)}&S_0^{(k)}\\
0&I&P_1&Q_1^{(2)}&R_1^{(2)}&\ldots&Q_1^{(k-1)}&R_1^{(k-1)}&Q_1^{(k)}\\
0&0&I&S_1^{(2)}&T_1^{(3)}&\ldots&S_1^{(k-1)}&T_1^{(k-1)}&S_1^{(k)}\\
0&0&0&I&P_2&\ldots&Q_2^{(k-1)}&R_2^{(k-1)}&Q_2^{(k)}\\
\vdots& & & &\vdots&\ddots&\vdots&  &\vdots\\
0&\ldots& & & 0&\ldots&I&P_{k-1}&Q_{k-1}^{(k)}\\
\vdots&\ddots&&&\vdots&\ddots&0&I&S_{k-1}^{(k)}\\
0&&&&0&\ldots&0&0&I
\end{array}\right],\nonumber
\eeqn
\beqn
I=\left(\begin{array}{cc}1&0\\0&1\end{array}\right),\nonumber
\eeqn
\beqn
\Psi_k=
\left[\begin{array}{ccccccc}
\varphi_0&\psi_1&\varphi_1&\ldots&\psi_{k-1}&\varphi_{k-1}&\psi_k
\end{array}\right]^T\nonumber
\eeqn
and
\beqn
C_k
=\left[\begin{array}{c}
B_0-T_0^{(k)}\varphi_k\\
A_1-R_1^{(k)}\varphi_k\\
B_1-T_1^{(k)}\varphi_k\\
\vdots\\
A_{k-1}-R_{k-1}^{(k)}\varphi_k\\
B_{k-1}-T_{k-1}^{(k)}\varphi_k\\
A_k-P_k\varphi_k
\end{array}\right].\nonumber
\eeqn
Both $U_k$ and $C_k$ are known from the formulae in the previous section so this system can be solved for $\psi_i$ and $\varphi_i$, $1\leq i< k$ ($\psi_0$ and $\varphi_k$ are also known from \eqn{para}). The Lax pair for the $k$-th member of the hierarchy is then obtained from \eqn{def_psiphi} and \eqn{lk}.

As an example, consider $k=1$ where we have
\beqn
\Psi_1=\left(\begin{array}{c}\varphi_0\\\psi_1\end{array}\right), 
\quad Z_1=\left(\begin{array}{cc}I&S_0^{(1)}\\0&I\end{array}\right), 
\quad C_1=\left(\begin{array}{c}B_0-T_0^{(1)}\varphi_1\\A_1-P_1\varphi_1\end{array}\right).\nonumber
\eeqn
Using \eqn{solnac}, \eqn{solnb}, \eqn{solnp} and \eqn{solnq} we find
\begin{align}
S_0^{(1)}&=\left(\begin{array}{cc}-1/\ga&0\\0&-1/\be\end{array}\right),
&A_1&=\left(\begin{array}{c}\ga\bes\\\be\gas\end{array}\right),\nonumber\\
T_0^{(1)}&=\frac{1}{\be\ga}\left(\begin{array}{cc}1&0\\0&1\end{array}\right),
&B_0&=\left(\begin{array}{c}\be\\\ga\end{array}\right)\nonumber
\end{align}
and
\beqn
P_1=\left(\begin{array}{cc}
(\ga/\gas+1)/\be&0\\
0&(\be/\bes+1)/\ga
\end{array}\right).\nonumber
\eeqn
Solving \eqn{recon} and noting \eqn{para} gives us the terms in the $L$ matrix from the Lax pair, which is
\beqn
L=\left(\begin{array}{cc} 1+\nu^2(\xs/\xa+x\yss/(\xa\ys)+x/\xss)&
\nu(y/\xs+\ys/\xss+\yss/\xa)+\nu^3y/\xa\\
\nu(x/\ys+\xs/\yss+\xss/\ya)+\nu^3x/\ya&
1+\nu^2(\ys/\ya+y\xss/(\ya\xs)+y/\yss)
\end{array}\right).\nonumber
\eeqn
We also include the $L$ matrix from the Lax pair when $k=2$:
\begin{align}
a_0&=1,\nonumber\\
a_1&=\frac{x}{\xss}+\frac{\xs}{\os{3}{x}}+\frac{\xss}{\os{4}{x}}+\frac{\os{3}{x}}{\xa}+\frac{x\yss}{\os{3}{x}\ys}+\frac{\xs\os{3}{y}}{\os{4}{x}\yss}+\frac{\xss\os{4}{y}}{\xa\os{3}{y}}+\frac{x\os{3}{y}}{\os{4}{x}\ys}+\frac{\xs\os{4}{y}}{\xa\yss}+\frac{x\os{4}{y}}{\xa\ys},\nonumber\\
a_2&=\frac{x}{\os{4}{x}}+\frac{\xs}{\xa}+\frac{x\yss}{\xa\ys}+\frac{x\os{4}{y}}{\xa\os{3}{y}}+\frac{x\os{3}{x}}{\xa\xss},\nonumber\\
b_0&=\frac{y}{\xa}+\frac{\ys}{\xss}+\frac{\yss}{\os{3}{x}}+\frac{\os{3}{y}}{\os{4}{x}}+\frac{\os{4}{y}}{\xa},\nonumber\\
b_1&=\frac{y}{\os{3}{x}}+\frac{\ys}{\os{4}{x}}+\frac{\yss}{\xa}+\frac{\xss y}{\xs\os{4}{x}}+\frac{\os{3}{x}\ys}{\xss\xa}+\frac{y\os{3}{y}}{\os{4}{x}\yss}+\frac{\ys\os{4}{y}}{\xa\os{3}{y}}+\frac{\os{3}{x}y}{\xa\xs}+\frac{y\os{4}{y}}{\xa\yss}+\frac{\xss y\os{4}{y}}{\xa\xs\os{3}{y}},\nonumber\\
b_2&=\frac{y}{\xa}.\nonumber
\end{align}
The other coefficients in the Lax pair, $c_i$ and $d_i$, $i=0,1,2$, are obtained from the corresponding $b_i$ and $a_i$, respectively, by interchanging $x$ and $y$. We then form $L$ using \eqn{lk}.

\subsection{Alternative forms of the solutions}\label{ssec:alt}
The main building blocks in the solutions are sums such as
\beqn
\sum_{j=0}^{i}\eta^{(2j)}_{2i}(\ga)(\os{g}{\be\ga})^j=\sum_{j=0}^i\os{g}{(\be\ga)}{}^j\sum_{h_1=0}^{2i}\sum_{h_2=0}^{h_1-1}\ldots\sum_{h_{2(i-j)+1}=0}^{h_{2(i-j)}-1}\os{h_1}{\ga}\,\os{h_2}{\be}\ldots\os{h_{2(i-j)+1}}{\be}.\nonumber
\eeqn
In this section, we survey some properties of these sums.

We make the claim
\beqn
\sum_{j=0}^{i}\eta^{(2j)}_{2i}(\ga)(\os{g}{\be\ga})^j=\sum_{h_1=0}^{2i}\sum_{h_2=0}^{h_1-1+\delta_g^{h_1}}\ldots\sum_{h_{2i+1}=0}^{h_{2i}-1+\delta_{g}^{h_{2i}}}\os{h_1}{\ga}\os{h_2}{\be}\ldots\os{h_{2i+1}}{\ga},\label{sumdelta}
\eeqn
where $\delta_g^h$ is the kroenecker $\delta$. Evidence to support this claim follows. It can be helpful to display all of the terms in such sums diagrammatically, an example of which with $i=4$, $g=5$ is given in figure \ref{fig:sumDiagram},  where each non-increasing path from a node on the left of this figure, to one on the right represents a term in the sum.
\begin{figure}[h!]
\includegraphics{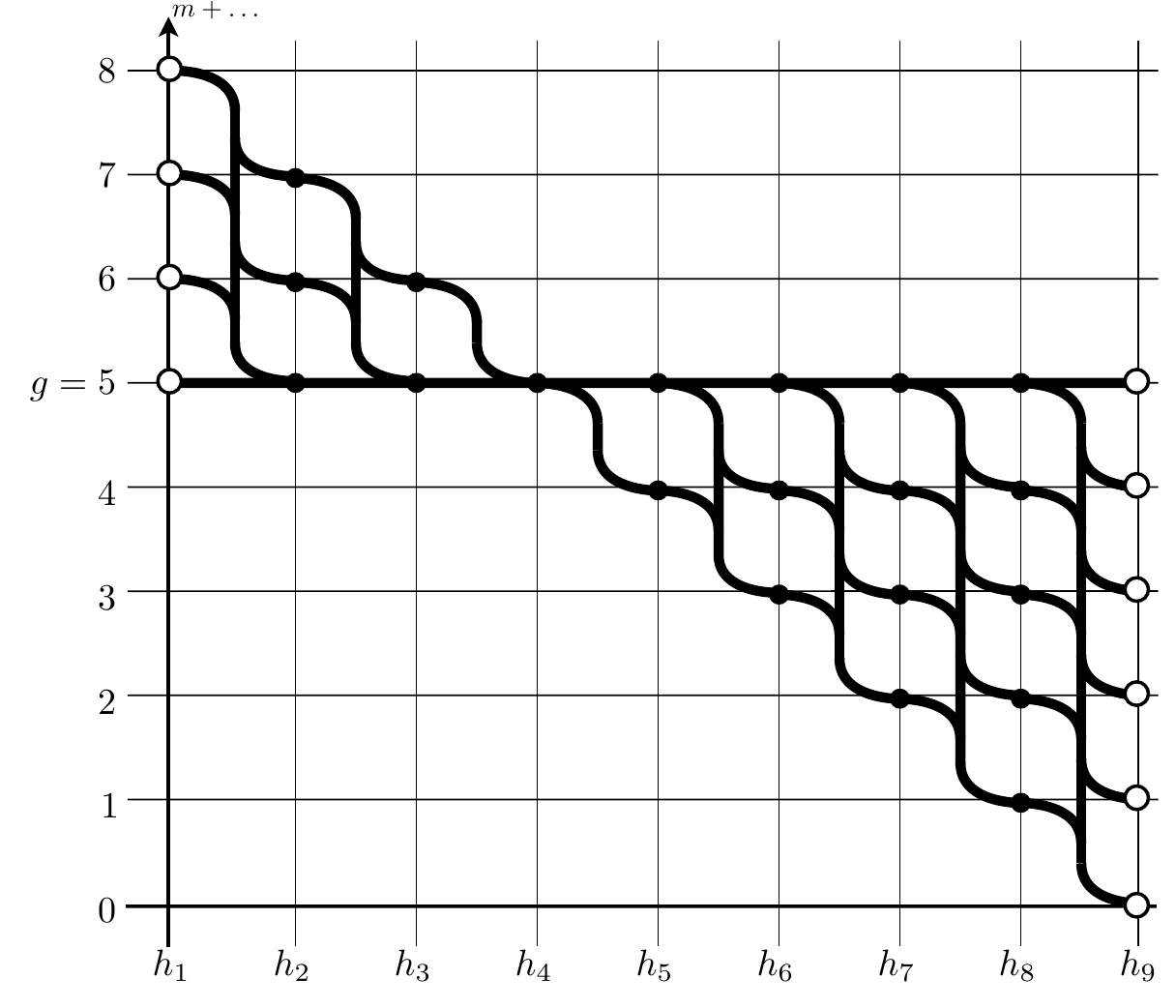}
\caption{Diagrammatic representation of sums such as those in \eqn{sumdelta}. Any non-increasing path from a node on the left to one on the right represents a term in the sum.}
\label{fig:sumDiagram}
\end{figure}

It is possible to separate the sums by removing terms containing the highest shift in $m$ as follows
\beqna
\sum_{h_1=0}^{2i}\ldots\sum_{h_{2i}=0}^{h_{2i-1}-1+\delta_{g}^{h_{2i-1}}}\os{h_1}{\be}\ldots\os{h_{2i}}{\ga} &=& \os{2i}{\be}\sum_{h_1=0}^{2i-1}\ldots\sum_{h_{2i-1}=0}^{h_{2i-2}-1+\delta_{g}^{h_{2i-2}}}\os{h_1}{\ga}\ldots\os{h_{2i-1}}{\ga}+\nonumber\\
&&\sum_{h_1=0}^{2i-1}\ldots\sum_{h_{2i}=0}^{h_{2i-1}-1+\delta_{g}^{h_{2i-1}}}\os{h_1}{\be}\ldots\os{h_{2i}}{\ga},\label{qitop}
\ee
or the lowest shift in $m$ as follows
\beqna
\sum_{h_1=0}^{2i}\ldots\sum_{h_{2i}=0}^{h_{2i-1}-1+\delta_{g}^{h_{2i-1}}}\os{h_1}{\be}\ldots\os{h_{2i}}{\ga}&=&\ga\sum_{h_1=1}^{2i}\ldots\sum_{h_{2i-1}=1}^{h_{2i-2}-1+\delta_{g}^{h_{2i-2}}}\os{h_1}{\ga}\ldots\os{h_{2i-1}}{\be}+\nonumber\\
&&\sum_{h_1=1}^{2i}\ldots\sum_{h_{2i}=1}^{h_{2i-1}-1+\delta_{g}^{h_{2i-1}}}\os{h_1}{\be}\ldots\os{h_{2i}}{\ga}.\label{qibot}
\ee
Such decompositions are often useful and they appear again in appendix \ref{app:solnpf}. This process can be repeated and, when the number of factors reduces to two less than the difference between the limits of the first sum (the number of possible $m$-shifts), a factor of $(\os{g}{\be\ga})$ can be taken outside the sum
\beqn
\sum_{h_1=1}^{2i}\ldots\sum_{h_{2i+1}=1}^{h_{2i}-1+\delta_{g}^{h_{2i}}}\os{h_1}{\be}\ldots\os{h_{2i+1}}{\be} = (\os{g}{\be\ga})\sum_{h_1=1}^{2i}\ldots\sum_{h_{2i-1}=1}^{h_{2i-2}-1+\delta_{g}^{h_{2i-2}}}\os{h_1}{\be}\ldots\os{h_{2i-1}}{\be}.\label{qibg}
\eeqn

To see that \eqn{sumdelta} holds, we proceed in this way until all of the repeated iterates have been removed, producing a polynomial in $\os{g}{(\be\ga)}$.  Combining the coefficients of this polynomial to reproduce \eqn{sumdelta} is not a trivial exercise, but we omit it here because it is not essential to other parts of this article.

A similar argument justifies the following expression for the sums involving $\xi$:
\beqn
\sum_{h_1=0}^{2i}\ldots\sum_{h_{2i}=0}^{h_{2i-1}-1+\de_g^{h_{2i-1}}}\os{h_1}{\be}\,\os{h_2}{\ga}\ldots\os{h_{2i-1}}{\be}\,\os{h_{2i}}{\ga}=\sum_{j=0}^i \xi^{(2j)}_{2i}(\beta)(\os{g}{\be\ga})^j.\label{xi1}
\eeqn

Equations \eqn{sumdelta} to \eqn{xi1} also lead to another alternative expression of the sums, this time as $2\times 2$ matrix products, which can be useful for computer algebra calculations. Define
\beqna
\omega_{a,b}^{(1)}(\be;g)&=&\sum_{h_1=b}^a\ldots\sum_{h_{a-b+1}=b}^{h_{a-b}-1+\delta_g^{h_{a-b}}}\os{h_1}\be\os{h_2}{\ga}\ldots\os{h_{a-b+1}}{\beta},\label{w1b}\\
\omega_{a,b}^{(0)}(\be;g)&=&\sum_{h_1=b}^a\ldots\sum_{h_{a-b}=b}^{h_{a-b-1}-1+\delta_g^{h_{a-b-1}}}\os{h_1}\be\os{h_2}{\ga}\ldots\os{h_{a-b}}{\gamma},\label{w0b}
\ee
where $a-b$ is assumed to be even, when $a-b$ is odd the last factors of each sum swap from $\be$ to $\gamma$ or {\em vice versa}. If the argument of $\omega_{a,b}^{(i)}(\be;g)$, $i=0,1$, is changed from $\be$ to $\ga$, then the first factor in the sums changes from $\be$ to $\ga$ and the remaining factors alternate as usual. There are $a-b$ factors (sums) in terms in $\omega^{(0)}_{a,b}$ while $\omega_{a,b}^{(1)}(\be;g)$ has $a-b+1$ factors. When $a=2i$ and $b=0$, \eqn{w1b} and \eqn{w0b} become the same quantities as those in \eqn{sumdelta} and \eqn{xi1} respectively.

Using the same idea behind \eqn{qitop} and \eqn{qibg}, we find that
\beqn
\left(\begin{array}{c}\omega_{a,b}^{(1)}(\be;g)\\
\omega_{a,b}^{(0)}(\ga;g) \end{array}\right)=
\left(\begin{array}{cc}\os{a}{\be}&\os{g}{(\be\ga)}\\1&\os{a}{\ga}\end{array}\right)
\left(\begin{array}{c}\omega_{a-1,b}^{(1)}(\ga;g)\\
\omega_{a-1,b}^{(0)}(\be;g) \end{array}\right),\label{front}
\eeqn
where $a>g$ and $b\leq g$. Using the same idea behind \eqn{qibot} and \eqn{qibg} yields
\beqn
\left(\begin{array}{c}\omega_{a,b}^{(1)}(\be;g)\\
\omega_{a,b}^{(0)}(\be;g) \end{array}\right)=
\left(\begin{array}{cc}\os{b}{\be}&\os{g}{(\be\ga)}\\1&\os{b}{\ga}\end{array}\right)
\left(\begin{array}{c}\omega_{a,b+1}^{(1)}(\be;g)\\
\omega_{a,b+1}^{(0)}(\be;g) \end{array}\right),\label{back}
\eeqn
where $a\geq g$, $b<g$ and $a-b$ is even. By repeatedly applying \eqn{front} and \eqn{back} we can remove the sums in these quantities and replace them with $2\times 2$ matrix products. Noting that 
\beqn
\left(\begin{array}{c}\omega_{g,g}^{(1)}(\be;g)\\
\omega_{g,g}^{(0)}(\ga;g) \end{array}\right)=\left(\begin{array}{c}\os{g}{\be}\\
1\end{array}\right),
\nonumber \eeqn
we find that
\beqna
\left(\begin{array}{c}\omega_{a,b}^{(1)}(\al;g)\\
\omega_{a,b}^{(0)}(\al;g) \end{array}\right)&=&\prod_{h=b}^{g-1}W_h^{(g)}(\be)\cdot\left[\left(\begin{array}{cc}1&0\\0&0\end{array}\right)\widetilde{\prod}_{h=g+1}^{a}W_h^{(g)}(\be)\left(\begin{array}{c}\os{g}{\be}\\1\end{array}\right)\right.\nonumber\\
&&\quad\quad\left.+\left(\begin{array}{cc}0&0\\0&1\end{array}\right)\widetilde{\prod}_{h=g+1}^{a}W_h^{(g)}(\ga)\left(\begin{array}{c}\os{g}{\ga}\\1\end{array}\right)\right], \label{product}
\ee
where products with a tilde, $\widetilde{\prod}$, are to be written in reverse order and
\beqn
W_h^{(g)}(\be) = \begin{cases} \left(\begin{array}{cc}\os{h}{\be}&\os{g}{(\be\ga)}\\1&\os{h}{\ga}\end{array}\right), & \mbox{if } h-g\mbox{ is even}, \\ \left(\begin{array}{cc}\os{h}{\ga}&\os{g}{(\be\ga)}\\1&\os{h}{\be}\end{array}\right), & \mbox{if } h-g \mbox{ is odd}, \end{cases}\nonumber
\eeqn
\beqn
\al = \begin{cases}\be, & \mbox{if } h-g\mbox{ is even}, \\ \ga, & \mbox{if } h-g \mbox{ is odd}. \end{cases}\nonumber
\eeqn
One can construct any of the relevant sums as matrix products by using \eqn{product}.

\section{Associated nonlinear systems}\label{sec:sys}
In this section we derive the nonlinear systems associated with the hierarchy of Lax pairs that are described by the formulae given in section \ref{sec:soln}. Because the compatibility conditions \eqn{cc} from every order of the spectral parameter were used in the construction of the formulae of section \ref{sec:soln}, we can be sure that the entire system of compatibility conditions is consistent and we only need to check one of them (from each matrix row) to find the associated coupled nonlinear system.

Noting \eqn{para}, at $i=k$ the $[1,2]$ entry of the compatibility condition \eqn{cc12} reads
\beqn
\frac{y}{\xs}\hat{a}_k+\frac{\ys}{\xas}=\frac{y}{\xa}+\frac{\ya}{\xas}d_k.\label{cc12e}
\eeqn
Using \eqn{ansatz} and the formula \eqn{solnb}, we find that
\beqna
\psi_k&=&A_k-P_k\varphi_k,\nonumber\\
a_k&=&\frac{x}{\os{2k}{x}}+\frac{x}{\xa}\sum_{j=1}^k\left(\frac{\os{2j}{y}}{\os{2j-1}{y}}+\frac{\os{2j-1}{x}}{\os{2j-2}{x}}\right),\label{ak}\\
d_k&=&\frac{y}{\os{2k}{y}}+\frac{y}{\ya}\sum_{j=1}^k\left(\frac{\os{2j}{x}}{\os{2j-1}{x}}+\frac{\os{2j-1}{y}}{\os{2j-2}{y}}\right).\label{dk}
\ee
Substituting these values into \eqn{cc12e} and into the corresponding compatibility condition from the $[2,1]$ entry \eqn{cc21} gives us the coupled pair of nonlinear equations associated with this Lax pair.
\begin{subequations}\label{lmkdvk}
\beqna
\frac{\xas}{\os{2k+1}{x}}+\frac{\os{2k+1}{y}}{\os{2k}{y}}&=&\frac{\xas}{\xa}+\frac{\ya}{\os{2k}{y}},\\
\frac{\yas}{\os{2k+1}{y}}+\frac{\os{2k+1}{x}}{\os{2k}{x}}&=&\frac{\yas}{\ya}+\frac{\xa}{\os{2k}{x}}.
\ee
\end{subequations}
In the case where $k=0$, \eqn{lmkdvk} is the LMKdV$_2$ equation from \cite{h09}. Indeed, \eqn{lmkdvk} is simply LMKdV$_2$ on a sheared lattice, as depicted in Figure \ref{fig:shear}.
\begin{figure}[h!]
\includegraphics{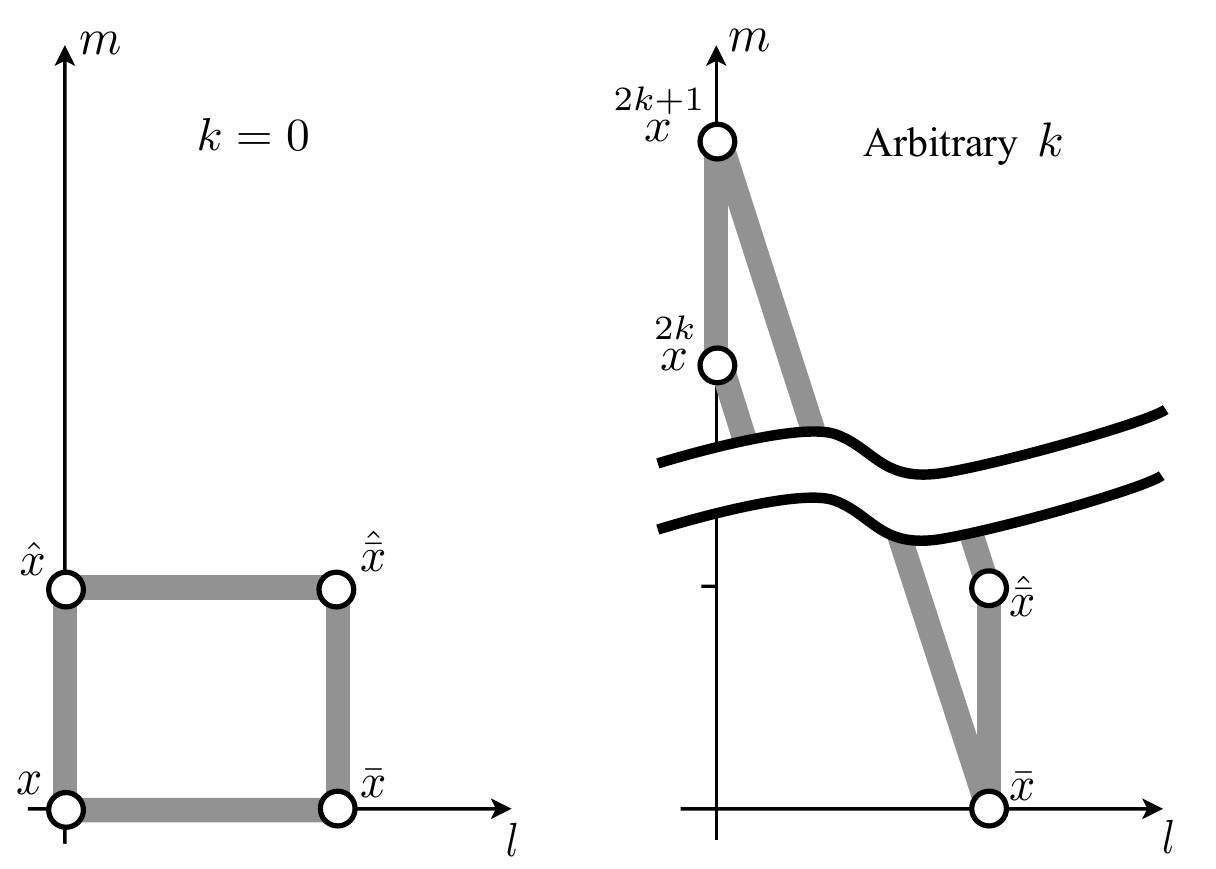}
\caption{Elementary quadralateral of LMKdV$_2$ for $k=0$ (left) and the points occupied on the sheared lattice by a member of the hierarchy with arbitrary $k$ (right).}
\label{fig:shear}
\end{figure}

LMKdV$_2$ is reduced to the familiar LMKdV by setting $y=\la\mu/x$ where $\la(l)$ and $\mu(m)$ are arbitrary functions. We can perform a similar operation in equations \eqn{lmkdvk}, setting $y=\al/x$ where $\al(l,m)$ must satisfy
\beqn
\os{k}{\al}\,\hat{\bar{\al}}=\bar{\al}\,\os{k+1}{\al}\nonumber
\eeqn
so that each of the equations in \eqn{lmkdvk} reduce to the same system. This implies that $\al$ must be of the form
\beqn
\al(l,m)=\la(l)\prod_{i=1}^{k}f_i(kl+m)^{\zeta^{im}},
\eeqn
where $\zeta^k=1$ and $f_i$, $1\leq i\leq k$, as well as $\la(l)$, are arbitrary. Therefore, the reduced equations contain $k+1$ arbitrary functions.

If we change variables such that $l'=l$ and $m'=kl+m$ then $\al'(l',m')=\la'(l')\mu'(m')$, $\la'$ and $\mu'$ arbitrary, and exactly LMKdV is retrieved.

\section{Discussion}
We have constructed a hierarchy of integrable partial difference equations associated with LMKdV. We have obtained explicit formulae that describe all equations in the hierarchy and all of their Lax pairs. We believe this is the first publication of such a result.

It is remarkable that the equations retain the same simple form throughout the hierarchy, even though the Lax pairs become very complicated and contain a large number of terms as the order of the polynomial in the spectral parameter is increased. Indeed, the solutions to the recursion relations from section \ref{sec:soln} contain far more terms than the Lax pairs, all of which file together perfectly to form the equations.

At present the solutions to the recursion relations from section \ref{sec:soln} are somewhat mysterious. The recursion relations are nonlinear and it will be interesting to understand how these solutions relate to the other known solution types.

LMKdV is not a special case, there is evidence to suggest that it is possible to construct similar hierarchies for other lattice systems with $2\times2$ Lax pairs, which, for example, includes all of the equations on the ABS list \cite{abs03}. Further work may also yield similar results in the widely studied continuous and semi-continuous domains.

\section*{acknowledgements}
The author acknowledges support from the Global COE Program “Education and Research Hub for Mathematics-for-Industry” from the Ministry of Education, Culture, Sports, Science and Technology, Japan. This research was also supported in part by the Australian Research Council Discovery grant no. DP110102001.

\appendix
\section{Proof of Theorem \ref{thm:soln}}\label{app:solnpf}
This appendix demonstrates why \eqn{solnp}, \eqn{solnq}, \eqn{solnr} and \eqn{solns} are solutions to \eqn{rr1} and \eqn{rr2}. Because it is the least complicated, we concentrate on the [1,2] entry of \eqn{rr1r}, but the other parts are done in a similar way.

Substituting \eqn{solnp} and \eqn{solnq} into \eqn{rr1r} returns
\beqn
Q_{i+1}^{(j)}=\Ga_i\left(\begin{array}{cc}\dots &\begin{array}{c}\frac{\bea}{\be}\widehat{s_1}^{(j)}_i-\frac{\bea}{\ga}{s_2}^{(j)}_i-\frac{\bea}{(\be\ga)^{j-i}}+\sum_{g=0}^{2i}\{\frac{1}{\be\ga-\os{g+1}{\be\ga}}\times\\
\left[(\os{g+1}{\be\ga})^{1-j}-\left(\frac{\be\ga}{\os{g+1}\be\ga}\right)^i (\be\ga)^{1-j} \right]
\left(\frac{\bea}{\be}\widehat{s_1}^{(g)}_i-\bea\widehat{t_1}^{(g)}_i\right)\}\end{array}\\
&\\
\dots & \begin{array}{c}\frac{\ga}{(\be\ga)^{j-i}}+\sum_{g=0}^{2i}\{\frac{1}{\be\ga-\os{g+1}{\be\ga}}\times\\
\left[(\os{g+1}{\be\ga})^{1-j}-\left(\frac{\be\ga}{\os{g+1}{\be\ga}}\right)^{i}(\be\ga)^{1-j}\right]
\left(-\widehat{s_2}^{(g)}_i+\ga \widehat{t_2}^{(g)}_i\right)\}\end{array}  \end{array}\right),\label{pf1_1}
\eeqn
where we have set ${s_1}^h_i=s^h_i$, while ${s_2}^h_i$ is obtained from ${s_1}^h_i$ by interchanging $\be$ and $\ga$, and $t^h_i$ follows the same pattern. Note that since we will concentrate on the [1,2] element of $Q^{(j)}_{i+1}$, the first column of the matrix on the RHS of \eqn{pf1_1} is irrelevant. To derive the above equation we have used
\beqna
\sum_{h=i+1}^{j-1}S^{(h)}_iD^{(j)}_h&=&\sum_{h=i+1}^{j-1}\frac{1}{(\be\ga)^{j-h}}S^{(h)}_i\left(\begin{array}{cc}-1&\bea/\be\\ \gaa/\ga&-1\end{array}\right),\nonumber\\
&=&\sum_{h=i+1}^{j-1}\frac{1}{(\be\ga)^{j-h}}\sum_{g=0}^{2i}\frac{1}{(\os{g+1}{\be\ga})^{h}}\left(\begin{array}{cc}-{s_1}^{(g)}_i&\frac{\bea}{\be}{s_1}^{(g)}_i\\ \frac{\gaa}{\ga}{s_2}_i^{(g)}&-{s_2}^{(g)}_i\end{array}\right),\nonumber\\
&=&\sum_{g=0}^{2i}\frac{1}{(\os{g+1}{\be\ga})^i}\frac{(\os{g+1}{\be\ga})^{i-j+1}-(\be\ga)^{i-j+1}}{\be\ga-\os{g+1}{(\be\ga)}}\left(\begin{array}{cc}-{s_1}^{(g)}_i&\frac{\bea}{\be}{s_1}^{(g)}_i\\ \frac{\gaa}{\ga}{s_2}_i^{(g)}&-{s_2}^{(g)}_i\end{array}\right)\nonumber
\ee
and similar expressions where necessary.

Expanding \eqn{pf1_1} and considering only the [1,2] element shows that
\beqn
0=\frac{1}{(\os{2i+1}{\be\ga})^j}\Theta_{2i+1}+\sum_{g=1}^{2i}\frac{1}{(\os{g}{\be\ga})^j}\Theta_g+\frac{1}{(\be\ga)^j}\Theta_0,\nonumber
\eeqn
where $\Theta_{2i+1}$, $\Theta_g$, $0<g\leq2i$, and $\Theta_0$, which are given explicitly below, must all be identically zero because of the independence of $j$ in this expression. We therefore have three equations which must be satisfied, these are
\beqn
\Theta_{2i+1}=0=\be\ga \widehat{s_1}^{(2i)}_{i}-\os{2i+1}{\be\ga}\be\widehat{t_1}^{(2i)}_{i}+\frac{\be}{\gas}\prod_{g=1}^i\frac{\os{2g-1}{\be}\,\os{2g}{\ga}}{\os{2g}{\be}\,\os{2g+1}{\ga}}\times\os{2i+1}{\be\ga} (-\widehat{s_2}_{i}^{(2i)}+\ga \widehat{t_2}^{(2i)}_{i}),\label{gao12}
\eeqn
\beqna
\Theta_g=0&=&\ga\widehat{s_1}^{(g-1)}_{i}+\frac{1}{\ga}((\os{g}{\be\ga})-\be\ga){s_2}^{(g)}_i-(\os{g}{\be\ga})\widehat{t_1}^{(g-1)}_{i}\nonumber\\
&&\quad+\frac{1}{\gas}\prod_{h=1}^{i}\frac{\os{2h-1}{\be}\,\os{2h}{\ga}}{\os{2h}{\be}\,\os{2h+1}{\ga}}\times(-\widehat{s_2}^{(g-1)}_{i}+\ga\widehat{t_2}^{(g-1)}_{i}),\label{zhong12}
\ee
and
\beqna
&&\Theta_0=0=\frac{\ga}{\gas}\prod_{h=1}^{i}\frac{\os{2h-1}{\be}\,\os{2h}{\ga}}{\os{2h}{\be}\,\os{2h+1}{\ga}}-\frac{{s_2}^{(0)}_i}{\ga(\be\ga)^i}-1+\label{di12}\\
&&\sum_{g=0}^{2i}\frac{\be\ga}{[\be\ga-(\os{g+1}{\be\ga})](\os{g+1}{\be\ga})^i}\left[\widehat{t_1}^{(g)}_i-\frac{1}{\be}\widehat{s_1}^{(g)}_i+\left(\prod_{h=1}^{i}\frac{\os{2h-1}{\be}\,\os{2h}{\ga}}{\os{2h}{\be}\,\os{2h+1}{\ga}}\right)\left(\frac{1}{\gas}\widehat{s_2}^{(g)}_i-\frac{\ga}{\gas}\widehat{t_2}^{(g)}_i\right)\right].\nonumber
\ee

First we show that \eqn{gao12} is satisfied. Using \eqn{solnp} and \eqn{solnq}, as well as \eqn{sumdelta} and \eqn{xi} we obtain
\beqna
0=-\ga\sum_{h_1=1}^{2i+1}\ldots\sum_{h_{2i}=1}^{h_{2i-1}-1+\delta_{2i+1}^{h_{2i-1}}}\os{h_1}{\ga}\os{h_2}{\be}\ldots\os{h_{2i-1}}{\ga}\os{h_{2i}}{\be}-\sum_{h_1=1}^{2i+1}\ldots\sum_{h_{2i+1}=0}^{h_{2i}-1+\delta_{2i+1}^{h_{2i}}}\os{h_1}{\ga}\os{h_2}{\be}\ldots\os{h_{2i}}{\be}\,\os{h_{2i+1}}{\ga}+\nonumber\\
\os{2i+1}{\ga}\sum_{h_1=1}^{2i+1}\ldots\sum_{h_{2i}=1}^{h_{2i-1}-1+\delta_{2i+1}^{h_{2i-1}}}\os{h_1}{\be}\os{h_2}{\ga}\ldots\os{h_{2i-1}}{\be}\os{h_{2i}}{\ga}+\frac{\ga}{\os{2i+1}{\be}}\sum_{h_1=1}^{2i+1}\ldots\sum_{h_{2i+1}=1}^{h_{2i}-1+\delta_{2i+1}^{h_{2i}}}\os{h_1}{\be}\os{h_2}{\ga}\ldots\os{h_{2i}}{\ga}\os{h_{2i+1}}{\be}.\nonumber
\ee
However, the sums arising here can be re-written using \eqn{qitop}, \eqn{qibot} and the arguments used to derive those. Therefore, we may write the second sum from above as
\beqn
\sum_{h_1=1}^{2i+1}\ldots\sum_{h_{2i+1}=1}^{h_{2i}-1+\delta_{2i+1}^{h_{2i}}}\os{h_1}{\ga}\ldots\os{h_{2i+1}}{\ga}=\os{2i+1}{\ga}\sum_{h_1=1}^{2i+1}\ldots\sum_{h_{2i}=1}^{h_{2i-1}-1+\delta_{2i+1}^{h_{2i-1}}}\os{h_1}{\be}\ldots\os{h_{2i}}{\ga}\nonumber
\eeqn
and the fourth sum is similar. This shows that \eqn{gao12} is satisfied.

Turning to \eqn{zhong12}, we agian substitute \eqn{solnp} and \eqn{solnq}, use \eqn{sumdelta} and \eqn{xi}, then cancel common terms to arrive at
\beqna
0&=&-\os{2i+1}{\be}\,\ga\sum_{h_1=1}^{2i+1}\ldots\sum_{h_{2i}=1}^{h_{2i-1}-1+\delta_{g}^{h_{2i-1}}}\os{h_1}{\ga}\ldots\os{h_{2i}}{\be} +\ga\sum_{h_1=1}^{2i+1}\ldots\sum_{h_{2i+1}=1}^{h_{2i}-1+\delta_{g}^{h_{2i}}}\os{h_1}{\be}\ldots\os{h_{2i+1}}{\be}  \nonumber\\
&&-\os{2i+1}{\be}\sum_{h_1=1}^{2i+1}\ldots\sum_{h_{2i+1}=1}^{h_{2i}-1+\delta_{g}^{h_{2i}}}\os{h_1}{\ga}\ldots\os{h_{2i+1}}{\ga}+(\os{g}{\be\ga})\sum_{h_1=1}^{2i+1}\ldots\sum_{h_{2i}=1}^{h_{2i-1}-1+\delta_{g}^{h_{2i-1}}}\os{h_1}{\be}\ldots\os{h_{2i}}{\ga}\nonumber\\
&&+ [(\os{2i+1}{\be\ga})-(\os{g}{\be\ga})]\sum_{h_1=0}^{2i}\ldots\sum_{h_{2i}=0}^{h_{2i-1}-1+\delta_{g}^{h_{2i-1}}}\os{h_1}{\be}\ldots\os{h_{2i}}{\ga}.\nonumber
\ee
It is easy to see that this cancels by using \eqn{qitop}, \eqn{qibot} and \eqn{qibg}. Thus, \eqn{zhong12} is satisfied. We also note that  equations \eqn{sumdelta} and \eqn{xi} are not essential to show that \eqn{gao12} and \eqn{zhong12} are satisfied. Although it is more complicated, we can use the solutions \eqn{solnp} and \eqn{solnq} directly. Doing so leads to a cancelation of terms similar to that described in Figure \ref{fig:termsmap}, see \eqn{etadecomp}, \eqn{xidecomp1} and \eqn{xidecomp2}, below, and the surrounding arguments for an explanation of this figure.

Finally, consider \eqn{di12}. Using the \eqn{solnp} and \eqn{solnq} we can rewrite \eqn{di12} as
\beqna
0=\ga\hat{\eta}^{(0)}_{2i}(\beta)-\os{2i+1}{\be}\hat{\eta}^{(0)}_{2i}(\ga)+\frac{\widehat{(\be\ga)}\prod_{j=1}^{i}\os{2j}{(\be\ga)}\os{2j+1}{(\be\ga)}}{\prod_{j=1}^{2i}[\os{j}{(\be\ga)}-\be\ga]}\sum_{h=0}^i(\be\ga)^h\xi_{2i}^{(2h)}(\be)\nonumber\\
+\widehat{(\be\ga)} \prod_{j=1}^i\os{2j}{(\be\ga)}\os{2j+1}{(\be\ga)}\cdot\sum_{g=0}^{2i}\left[\frac{\be\ga/\os{g+1}{(\be\ga)}}{\prod_{j=0, j\neq g+1}^{2i+1}[\os{j}{(\be\ga)}-\os{g+1}{(\be\ga)}]}\sum_{h=0}^i[\os{g+1}{(\be\ga)}]^h[\os{2i+1}{\be}\,\hat{\eta}^{(2h)}_{2i}(\ga)\right.\nonumber\\
\left.+\frac{\os{2i+1}{\be}}{\be}\os{g+1}{(\be\ga)}\hat{\xi}^{(2h)}_{2i}(\ga)-\os{g+1}{(\be\ga)}\hat{\xi}^{(2h)}_{2i}(\be)-\ga\hat{\eta}^{(2h)}_{2i}(\be)] \right] .\quad\quad\label{xia12}
\ee

In the following, we often need to differentiate between terms that contain multiple factors shifted by the same amount in $m$ and terms that do not. Let us refer to any $\be$ and $\ga$ shifted by the same amount in $m$, {\em i.e.} $\os{g}{(\be\ga)}$, as `repetitively shifted'. Let us refer to those terms that do not contain any $\be$ and $\ga$ shifted by the same amount in $m$ as `cascading'. Both repetitively shifted and cascading terms can be seen in Figure \ref{fig:sumDiagram}. In that figure, terms containing repetitive shifts must all have at least two nodes along the horizontal line through $g$, while cascading terms have at most one node along this line. 

To show that \eqn{xia12} holds, we factor out all of the repetitively shifted terms and focus on the cascading terms. The strategy is to show that the cascading terms group together in certain ways and that the repetitively shifted terms multiplying these groups all cancel, proving that \eqn{xia12} is satisfied.

In \eqn{xia12}, the quantities that contain only cascading terms are: 
\beqn
\xi_{2i}^{(2h)}(\be), \quad\os{2i+1}{\be}\,\hat{\eta}^{(2h)}_{2i}(\ga),\quad\frac{\os{2i+1}{\be}}{\be}\hat{\xi}^{(2h)}_{2i}(\ga),\quad\hat{\xi}^{(2h)}_{2i}(\be),\quad\ga\hat{\eta}^{(2h)}_{2i}(\be),\quad0\leq h\leq 2i
\eeqn
 (note that some terms within $\frac{\os{2i+1}{\be}}{\be}\hat{\xi}^{(2h)}_{2i}(\ga)$ contain a factor of $\os{2i+1}{(\be\ga)}$, which is taken into consideration below).

Obviously we may use the definition \eqn{eta} and write
\beqna
\eta^{(2h)}_{2i}(\ga)&=&\sum_{h_1=0}^{2i}\sum_{h_2=0}^{h_1-1}\ldots\sum_{h_{2(i-h)}=0}^{h_{2(i-h)-1}-1}\sum_{h_{2(i-h)+1}=0}^{h_{2(i-h)}-1}\os{h_1}{\ga}\,\os{h_2}{\be}\ldots\os{h_{2(i-h)}}{\be}\,\os{h_{2(i-h)+1}}{\ga},\nonumber\\
&=&\os{2i}{\ga}\sum_{h_2=0}^{2i-1}\ldots\sum_{h_{2(i-h)+1}=0}^{h_{2(i-h)}-1}\os{h_2}{\be}\ldots\os{h_{2(i-h)}}{\be}\,\os{h_{2(i-h)+1}}{\ga}\nonumber\\
&&\quad\quad+\sum_{h_1=0}^{2i-1}\ldots\sum_{h_{2(i-h)+1}=0}^{h_{2(i-h)}-1}\os{h_1}{\ga}\,\os{h_2}{\be}\ldots\os{h_{2(i-h)}}{\be}\,\os{h_{2(i-h)+1}}{\ga},\nonumber\\
&=&\eta^{(2h)}_{2i}(\ga)\bigg|_{h_1=2i}+\eta^{(2h)}_{2i}(\ga)\bigg|_{h_1< 2i}, \quad 0\leq h\leq2i.\label{etadecomp}
\ee
Notice that when $h=0$, $\eta_{2i}^{(0)}$ has $2i+1$ sums, each with a decreasing upper limit. Thus, $\eta^{(0)}_{2i}$ must have $h_1=2i$, so the second term in \eqn{etadecomp} is zero when $h=0$. The other terms in \eqn{xia12} involving $\eta$ can also be decomposed in a similar manner.

We may decompose the terms involving $\xi$ into parts that are related to $\eta$ as follows.
\beqna
\xi^{(2h)}_{2i}(\be)&=&\sum_{h_1=0}^{2i}\sum_{h_2=0}^{h_1-1}\ldots\sum_{h_{2(i-h)}=0}^{h_{2(i-h)-1}-1}\os{h_1}{\be}\,\os{h_2}{\ga}\ldots\os{h_{2(i-h)-1}}{\be}\,\os{h_{2(i-h)}}{\ga},\nonumber\\
&=&\os{2i}{\be}\sum_{h_2=0}^{2i-1}\ldots\sum_{h_{2(i-h)}=0}^{h_{2(i-h)-1}-1}\os{h_2}{\ga}\ldots\os{h_{2(i-h)-1}}{\be}\,\os{h_{2(i-h)}}{\ga}\nonumber\\
&&\quad\quad+\sum_{h_1=0}^{2i-1}\ldots \sum_{h_{2(i-h)}=0}^{h_{2(i-h)-1}-1}\os{h_1}{\be}\,\os{h_2}{\ga}\ldots\os{h_{2(i-h)-1}}{\be}\,\os{h_{2(i-h)}}{\ga},\nonumber\\
&=&\os{2i}{\be}\sum_{h_1=0}^{2i-1}\ldots\sum_{h_{2(i-h-1)+1}=0}^{h_{2(i-h-1)}-1}\os{h_1}{\ga}\ldots\os{h_{2(i-h-1)}}{\be}\,\os{h_{2(i-h-1)+1}}{\ga}\nonumber\\
&&\quad\quad+\frac{\os{2i}{\be}}{\os{2i}{(\be\ga)}}\os{2i}{\ga}\sum_{h_2=0}^{2i-1}\ldots \sum_{h_{2(i-h)+1}=0}^{h_{2(i-h)}-1}\os{h_2}{\be}\,\os{h_3}{\ga}\ldots\os{h_{2(i-h)}}{\be}\,\os{h_{2(i-h)+1}}{\ga},\nonumber\\
\xi^{(2h)}_{2i}(\be)&=&\os{2i}{\be}\eta^{(2h+2)}_{2i}(\ga)\bigg|_{h_1< 2i}+\frac{\os{2i}{\be}}{\os{2i}{(\be\ga)}}\eta^{(2h)}_{2i}(\ga)\bigg|_{h_1=2i}.\label{xidecomp1}
\ee
We may also decompose $\xi$ in a different way as follows.
\beqna
\xi^{(2h)}_{2i}(\be)&=&\sum_{h_1=0}^{2i}\ldots\sum_{h_{2(i-h)-1}=0}^{h_{2(i-h)-2}-1}\sum_{h_{2(i-h)}=0}^{h_{2(i-h)-1}-1}\os{h_1}{\be}\,\os{h_2}{\ga}\ldots\os{h_{2(i-h)-1}}{\be}\,\os{h_{2(i-h)}}{\ga},\nonumber\\
&=&\ga\sum_{h_1=1}^{2i}\ldots\sum_{h_{2(i-h)-1}=1}^{h_{2(i-h)-2}-1}\os{h_1}{\be}\os{h_2}{\ga}\ldots\os{h_{2(i-h)-1}}{\be}\nonumber\\
&&\quad\quad+\sum_{h_1=1}^{2i}\ldots \sum_{h_{2(i-h)}=1}^{h_{2(i-h)-1}-1}\os{h_1}{\be}\,\os{h_2}{\ga}\ldots\os{h_{2(i-h)-1}}{\be}\,\os{h_{2(i-h)}}{\ga},\nonumber\\
&=&\ga\sum_{h_1=0}^{2i-1}\ldots\sum_{h_{2(i-h-1)+1}=0}^{h_{2(i-h-1)}-1}\hat{\os{h_1}{\be}}\hat{\os{h_2}{\ga}}\ldots\widehat{\os{h_{2(i-h-1)+1}}{\be}}\nonumber\\
&&\quad\quad+\frac{\os{2i}{\be}}{\os{2i}{(\be\ga)}}\os{2i}{\ga}\sum_{h_2=0}^{2i-1}\ldots \sum_{h_{2(i-h)+1}=0}^{h_{2(i-h)}-1}\hat{\os{h_2}{\be}}\,\hat{\os{h_3}{\ga}}\ldots\widehat{\os{h_{2(i-h)}}{\be}}\,\widehat{\os{h_{2(i-h)+1}}{\ga}},\nonumber\\
\xi^{(2h)}_{2i}(\be)&=&\ga\hat{\eta}^{(2h+2)}_{2i}(\be)\bigg|_{h_1< 2i}+\frac{\os{2i}{\be}}{\os{2i}{(\be\ga)}}\hat{\eta}^{(2h)}_{2i}(\ga)\bigg|_{h_1=2i}.\label{xidecomp2}
\ee
Using \eqn{etadecomp}, \eqn{xidecomp1} and \eqn{xidecomp2} it is easy to show that the cascading terms group together as shown in Figure \ref{fig:termsmap}, note that any repetitively shifted factors are omitted from that figure. We give directions about how to use this figure below.

Before we do that, let us recall two so-called `Euler Identities' which are crucial to what follows. For an arbitrary set of $i+1$ distinct points $a_j\in\mathbb{C}$, $0\leq j\leq i$, the first identity is:
\beqn
\prod_{j=0}^{i}a_j\cdot\sum_{g=0}^{i}\frac{1}{a_g\prod_{h=0,\,h\neq g}^{i}(a_h-a_g)}=1,\label{id1}
\eeqn
and the second identity is:
\beqn
\sum_{g=0}^{i}\frac{a_g^p}{\prod_{h=0,\,h\neq g}^{i}(a_h-a_g)}=1,\label{id2}
\eeqn
for an integer $p$ with $0\leq p<i$. These two identities are either given explicitly in, or easily derived from arguments outlined in, section 2.7 of \cite{h98}.

\begin{figure}[h!]
\includegraphics{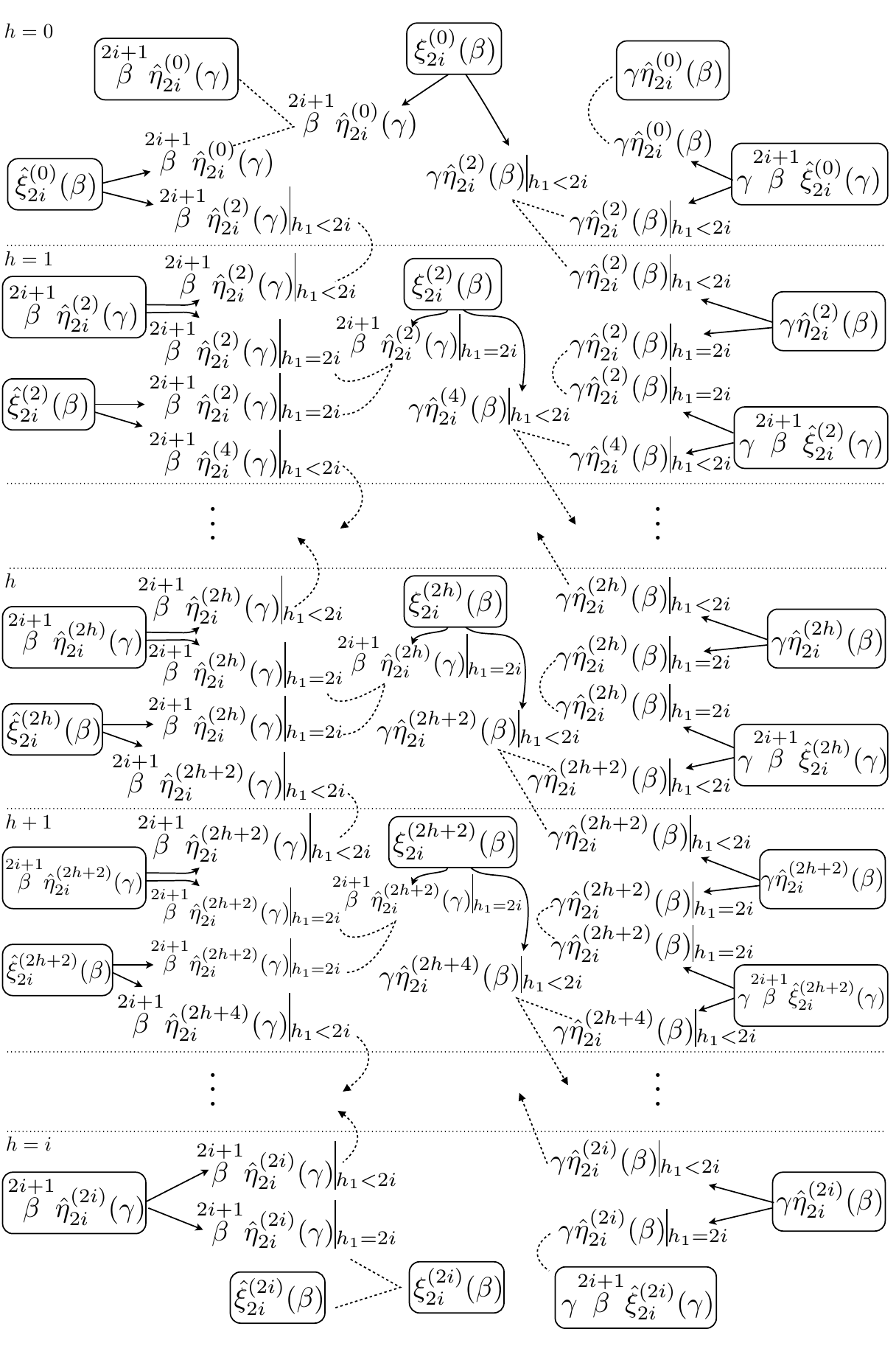}
\caption{Relationship between the various terms that appear in \eqn{xia12}. Terms that appear explicitly in the equation are boxed. Solid arrows indicate decomposition of a sum into two terms. Dashed lines link groups of like terms. The variable $h$ is the same as that from the inner sum of the last term in \eqn{xia12}.}
\label{fig:termsmap}
\end{figure}

Next we turn to the coefficients of each of the groups of cascading terms, all such coefficients are made up of repetitively shifted terms. Consider the top level of Figure \ref{fig:termsmap}, which corresponds to $h=0$. This $h$ also corresponds to the index from the sum in the third term, and the inner sum in the last term, of \eqn{xia12}. Due to \eqn{xidecomp1} and \eqn{xidecomp2} we find that $\hat{\xi}_{2i}^{(0)}(\be)$, $\os{2i+1}{\be}\hat{\eta}_{2i}^{(0)}(\ga)$ and $\xi_{2i}^{(0)}(\be)$ have common terms in their decompositions, as depicted by a dotted line in Figure \ref{fig:termsmap}. The coefficients multiplying these terms in \eqn{xia12} sum to:
\beqna
0&=&-1+\frac{\widehat{(\be\ga)}}{\os{2i+1}{(\be\ga)}}\frac{\prod_{h=1}^i\os{2h}{(\be\ga)}\os{2h+1}{(\be\ga)}}{\prod_{h=1}^{2i}[\os{h}{(\be\ga)}-\be\ga]}\nonumber\\
&&\quad\quad+\widehat{(\be\ga)}\prod_{h=1}^i\os{2h}{(\be\ga)}\os{2h+1}{(\be\ga)}\cdot\sum_{g=0}^{2i}\frac{\be\ga/\os{g+1}{(\be\ga)}[1-\os{g+1}{(\be\ga)}/\os{2i+1}{(\be\ga)}]}{\prod_{h=0, h\neq g+1}^{2i+1}[\os{h}{(\be\ga)}-\os{2g+1}{(\be\ga)}]},\nonumber\\
\Rightarrow \quad1&=&\prod_{h=0}^{2i}\os{h}{(\be\ga)}\cdot \sum_{g=0}^{2i}\frac{1}{\os{g}{(\be\ga)}}\frac{1}{\prod_{h=0,h\neq g}^{2i}[\os{h}{(\be\ga)}-\os{g}{(\be\ga)}]}.
\ee
This is the same as \eqn{id1}.

The other parts of \eqn{xia12}, at the level of $h=0$ in Figure \ref{fig:termsmap} are cancelled in a similar way. We turn our attention to the arbitrary level $0<h<2i$ of that figure. As an example, on the right hand side we observe the following set of terms with common factors:
\beqn
\ga \os{2i+1}{\be}\hat{\xi}^{(2h)}_{2i}(\ga), \quad \ga\hat{\eta}^{(2h+2)}_{2i}(\be)\bigg|_{h_1<2i}, \quad \xi^{(2h)}_{2i}(\be).\nonumber
\eeqn
Adding together the relevant coefficients from \eqn{xia12} leads to
\beqna
0&=&\widehat{(\be\ga)}\frac{\prod_{j=1}^i\os{2j}{(\be\ga)}\os{2j+1}{(\be\ga)}}{\prod_{j=1}^{2i}[\os{j}{(\be\ga)}-\be\ga]}(\be\ga)^h\nonumber\\
&&+\widehat{(\be\ga)}\prod_{j=1}^i\os{2j}{(\be\ga)}\os{2j+1}{(\be\ga)}\sum_{g=1}^{2i}\frac{\be\ga/\os{g+1}{(\be\ga)}}{\prod_{j=0,j\neq g}^{2i+1}[\os{j}{(\be\ga)}-\os{g+1}{(\be\ga)}]}[\os{g+1}{(\be\ga)}]^{h+1}[\os{2i+1}{(\be\ga)}/(\be\ga)-1]\nonumber\\
0&=&\sum_{g=0}^{2i+1}\frac{[\os{g}{(\be\ga)}]^h}{\prod_{j=0,j\neq g}^{2i+1}[\os{h}{(\be\ga)}-\os{g}{(\be\ga)}]}, \quad 0<h<2i.\nonumber
\ee
This is the same as \eqn{id2}. One can check in a similar way that all of the parts of \eqn{xia12} are either identically satisfied or equivalent to one of \eqn{id1} or \eqn{id2}. We point out that the solutions match the truncation of the sums in \eqn{xia12} because $\xi_{2i}^{(2i)}=1$.

Thus, the $(1,2)$ entry of \eqn{rr1r} is satisfied by the solutions \eqn{solnp} and \eqn{solnq}. In a similar way, it can be shown that each of the recursion relations in \eqn{rr1} and \eqn{rr2} are also satisfied by \eqn{solnp}, \eqn{solnq}, \eqn{solnr} and \eqn{solns}.


\end{document}